\definecolor{DarkGreen}{rgb}{0.1,0.5,0.1}
\definecolor{DarkRed}{rgb}{0.5,0.1,0.1}
\definecolor{DarkBlue}{rgb}{0.1,0.1,0.5} 
\def\>{\rangle} 
\def\<{\langle}
\newtheorem{definitionenv}{Definition}
\newtheorem{lemmaenv}[definitionenv]{Lemma}
\newtheorem{theoremenv}[definitionenv]{Theorem}
\newtheorem{corollaryenv}[definitionenv]{Corollary}
\newtheorem{propositionenv}[definitionenv]{Proposition}
\newtheorem{conjectureenv}[definitionenv]{Conjecture}
\newtheorem{exampleenv}{Example}
\newtheorem{app-lemmaenv}[section]{Lemma}
\newenvironment{definition}{\begin{definitionenv}\rm}{\end{definitionenv}}
\newenvironment{lemma}{\begin{lemmaenv}\rm}{\end{lemmaenv}}
\newenvironment{theorem}{\begin{theoremenv}\rm}{\end{theoremenv}}
\newenvironment{corollary}{\begin{corollaryenv}\rm}{\end{corollaryenv}}
\newenvironment{example}{\begin{exampleenv}\rm}{\end{exampleenv}}
\newenvironment{proposition}{\begin{propositionenv}\rm}{\end{propositionenv}}
\newenvironment{conjecture}{\begin{conjectureenv}\rm}{\end{conjectureenv}}
\newenvironment{app-lemma}{\begin{app-lemmaenv}\rm}{\end{app-lemmaenv}}
\newcommand{\bd}{\begin{definition}}
\newcommand{\ed}{\end{definition}}
\newcommand{\bl}{\begin{lemma}}
\newcommand{\el}{\end{lemma}}
\newcommand{\elp}{\hspace*{\fill} $\Box$
                 \end{lemma}}
\newcommand{\bt}{\begin{theorem}}
\newcommand{\et}{\end{theorem}}
\newcommand{\etp}{\hspace*{\fill} $\Box$
                 \end{theorem}}
\newcommand{\bc}{\begin{corollary}}
\newcommand{\ec}{\end{corollary}}
\newcommand{\ecp}{\hspace*{\fill} $\Box$
                 \end{corollary}}
\newcommand{\bcj}{\begin{conjecture}}
\newcommand{\ecj}{\end{conjecture}}
\newcommand{\be}{\begin{example}}
\newcommand{\ee}{\end{example}}
\newcommand{\eep}{\hspace*{\fill} $\Box$
                 \end{example}}
\newcommand{\bp}{\begin{proposition}}
\newcommand{\ep}{\end{proposition}}
\newcommand{\epp}{
                 \end{proposition}}
\newcommand{\bx}{{\bf x}}
\newcommand{\cG}{{\cal G}}
\newcommand{\cK}{{\cal K}}
\newcommand{\cM}{{\cal M}}
\newcommand{\cN}{{\cal N}}
\newcommand{\cQ}{{\cal Q}}
\newcommand{\mC}{{\mathbb C}}
\newcommand{\ket}[1]{|#1\rangle}
\newcommand{\tr}{\text{tr}}
\newcommand{\Tr}[1]{\text{Tr}\left(#1\right)}
\newcommand{\wt}[1]{\text{wt}\left(#1\right)}
\newcommand{\ot}{\otimes}
\def\picode{\texttt{picode}}
\newcommand{\psiew}{\ket{{\texttt{AUX}}}}
\newcommand{\supewe}{}
\newcommand{\ssl}{^{\texttt{SL}}}
\newcommand{\zo}{\{0,1\}}
\newcommandx{\yellownote}[2][1=]{\todo[linecolor=yellow,backgroundcolor=yellow!25,bordercolor=yellow,#1]{#2}}
\newcommandx{\greennote}[2][1=]{\todo[inline,linecolor=olive,backgroundcolor=green!25,bordercolor=olive,#1]{#2}}
\begin{document}
\title{Linear programming bounds for quantum channels acting on quantum error-correcting codes}
\author{%
Yingkai Ouyang and  Ching-Yi Lai
\thanks{\footnotesize 
This article was presented in part at ISIT 2020.

YO is with the Department of Electrical and Computer Engineering, National University of Singapore, Singapore. Part of work was completed when YO was at the Department of Physics \& Astronomy,
  University of Sheffield,
  Sheffield, S3 7RH, United Kingdom. email: oyingkai@gmail.com

CYL is with the Institute of Communications Engineering, National Yang Ming Chiao Tung University, Hsinchu 30010, Taiwan. email:cylai@nycu.edu.tw
	 }}

\maketitle

\begin{abstract}
While quantum weight enumerators establish some of the best upper bounds on the minimum distance of quantum error-correcting codes, these bounds are not optimized to quantify the performance of quantum codes under the effect of arbitrary quantum channels that describe bespoke noise models.
Herein, for any Kraus decomposition of any given quantum channel, we introduce corresponding quantum weight enumerators that naturally generalize the Shor-Laflamme  quantum weight enumerators.
We establish an indirect linear relationship between these generalized quantum weight enumerators by introducing an auxiliary exact weight enumerator that completely quantifies the quantum code's projector, and is independent of the underlying noise process. 
By additionally working within the framework of approximate quantum error correction,
we establish a general framework for constructing a linear program that is infeasible whenever approximate quantum error correcting codes with corresponding parameters do not exist. 
Our linear programming framework allows us to establish the non-existence of certain quantum codes that approximately correct amplitude damping errors, 
and obtain non-trivial upper bounds on the maximum dimension of a broad family of permutation-invariant quantum codes.
\end{abstract}


\section{Introduction}
\label{sec:introduction}

The distance of an error-correcting code is of central importance in coding theory, because it quantifies the number of adversarial errors that can be corrected. 
For codes of fixed length and rate, upper and lower bounds on their distance can be determined.
The best lower bounds can be obtained from various randomized code constructions that yield the Gilbert-Varshamov bound~\cite{MS77} and this is also true in the quantum case~\cite{FeM04,JiX11,ouyang2014concatenated}).
On the contrary, markedly different techniques are used to derive upper bounds. 
In classical coding theory, weight enumerators count the weight distribution of codewords in a code~\cite{MS77}. 
The MacWilliams identity establishes a linear relationship between the weight enumerators of a code and that of its dual code. 
This allows one to obtain upper bounds on the distance of codes by linear programming. (This may be improved via the Terwilliger algebra and semidefinite programming \cite{schrijver2005new}.)
Further extensions of this technique leads to the celebrated algebraic linear programming bounds \cite{delsarte1973algebraic,Aal90}.

The notion of weight enumerators in the quantum setting is less obvious, because quantum codes on $n$ qubits are subspaces of $\mathbb C^{2^n}$, and these subspaces do not in general admit a combinatorial interpretation.
Shor and Laflamme nonetheless introduced a meaningful definition of weight enumerators for quantum codes \cite{SL97} in terms of the codes' projectors $P$ and a nice error basis for matrices.
In particular, the Shor-Laflamme (SL) quantum weight enumerators are sums of terms of the form $|\tr(EP)|^2$ and $\tr(E P E^\dagger P)$, respectively, where the sums are performed over all Paulis $E$ of a given weight.
We will call the vectors of these enumerators labeled by Pauli weights the A-type and B-type quantum weight enumerators, respectively.
Shor and Laflamme showed that the A-type and B-type quantum weight enumerators are still linearly related in a way reminiscent of the classical relationship~\cite{SL97}. The relation between the two enumerators is the quantum analogue of the famous MacWilliams identity. 
Variations on the SL enumerators were then studied by Rains, which allowed better bounds on the parameters of quantum codes~\cite{Rains98a}.
Because of the existence of a linear relationship between the two types of enumerators, 
linear programming techniques can be applied to establish upper bounds on the minimum distance for (small) quantum stabilizer codes~\cite{CRSS98}. Algebraic linear programming bounds based on the MacWilliams identity, such as the Singleton, Hamming, and the first linear programming bounds, are also derived for general quantum codes~\cite{AL99}. These results have been extended to entanglement-assisted quantum stabilizer codes~\cite{LBW13,LA18} and quantum data-syndrome codes~\cite{ALB20}.
Also there is a MacWilliams identity for (entanglement-assisted) quantum convolutional codes~\cite{LHL16}.
Recently it was shown that the SL weight enumerators of a codeword stabilized quantum code has an interpretation as the enumerator of an associated classical code~\cite{NK21}.

Although the distance of a quantum code is a meaningful metric with respect to adversarial noise, estimates on the performance of a quantum code derived from the distance under specific noise models are often overly pessimistic.
For instance, while a minimum of five qubits is needed to perfectly correct an arbitrary error~\cite{SL97}, four qubits suffice to correct a single \textit{amplitude damping}  (AD) error~\cite{LNCY97}.
However, most quantum weight enumerators give no direct result regarding limits on the ultimate performance of quantum codes under the influence of general quantum channels, even in the simple case of AD errors. 
To better understand these fundamental limits, it would be advantageous to have MacWilliams-type identities for different quantum weight enumerators defined for various noisy quantum channels, from which corresponding linear programming bounds can be obtained.
Currently, most linear programming bounds for quantum codes use quantum weight enumerators only describe quantum error correction in the perfect setting~\cite{Rains98a}. Because of this, these methods do not readily extend to quantum codes under the action of arbitrary quantum channels and in the paradigm of approximate quantum error correction (AQEC). 

To address the aforementioned problems, we extend the theory of quantum weight enumerators to deal with  AQEC codes for any given quantum channel. 
Namely, we generalize the two SL quantum weight enumerators to address quantum codes under the influence of any set of Kraus operators. 
This goes beyond the theory  that the authors previously introduced in Ref \cite{ouyang2020linear}, where only amplitude damping errors were discussed.
While we do not have a MacWilliams identity that establishes a direct linear relationship between these two generalized quantum weight enumerators, we do establish an indirect linear relationship between them.
To enable this, we rely on an \textit{auxiliary exact weight enumerator} with respect to Pauli operators,
which is exact in the sense that it depends explicitly on the matrix decomposition of the code projector $P$ in the Pauli basis.
We thereby show linear connections between this enumerator and our two generalized quantum weight enumerators. 
This allows us to establish a linear program that is infeasible only when AQEC codes do not exist. 

To illustrate the utility of our framework, we apply it in two different scenarios. 
In the first scenario, we establish the non-existence of quantum codes that approximately correct amplitude damping errors. 
In particular, we numerically rule out the existence of three-qubit AQEC AD codes that are capable  of  correcting  an  arbitrary  AD  error.
Our linear program cannot eliminate the existence of a four-qubit code that can correct one AD error and this agrees to the four-qubit AD code proposed in~\cite{LNCY97}. 
In the second scenario, we show how our framework can be adapted to quantum codes that must be permutation-invariant, and provide linear-programming bounds for the non-existence of permutation-invariant codes of prescribed distance.

This paper is organized as follows.
In Sec.~\ref{sec:prelims}, we review notation for Pauli operators (Sec.~\ref{subsec:paulis}), quantum channels (Sec.~\ref{subsec:quantum-channels}), quantum codes and weight enumerators (Sec.~\ref{subsec:SL-enumerators}), and the space of complex square matrices (Sec.~\ref{subsec:matrix-space}). 
In Sec.~\ref{sec:Kraus-wtenum}, we introduce our quantum weight enumerators for general Kraus operators in AQEC.
In Sec.~\ref{sec:aux}, we introduce auxiliary weight enumerators, and in Sec.~\ref{sec:connection-matrices}, we propose connection matrices that establish linear relationships between our quantum weight enumerators and the auxiliary weight enumerators.
In Sec.~\ref{sec:linear program bounds}, we formulate a linear program for general quantum channels.
We discuss applications of the linear program bounds for AD errors in Sec.~\ref{subsec:AD}
and for permutation-invariant quantum codes in Sec.~\ref{subsec:picodes}.
We conclude our results 
in Sec.~\ref{sec:discussions}, 

\section{Preliminaries} \label{sec:prelims}
 
\subsection{Pauli Operators} \label{subsec:paulis}
A single-qubit state space is a two-dimensional complex Hilbert space $\mathbb{C}^2$,
and a multiple-qubit state space is simply the tensor product space of single-qubit spaces $(\mathbb C^2)^{\otimes n}=\mathbb C^{2^n}$.
The Pauli matrices
$$\left\{I_2=\begin{bmatrix}1 &0\\0&1\end{bmatrix}, X=\begin{bmatrix}0 &1\\1&0\end{bmatrix},  Z=\begin{bmatrix}1 &0\\0&-1\end{bmatrix}, Y=iXZ\right\}$$
form a basis of the linear operators on $\mathbb{C}^2$.
Let  
\[{\cG}_n=\{M_1\otimes M_2\otimes \cdots \otimes M_n: M_j\in\{I_2,X,Y,Z\} \},\] which is a basis of the linear operators on the $n$-qubit state space $\mathbb{C}^{2^n}$.
The weight of an element  $E=M_1\otimes  \cdots \otimes M_n$ in $\cG_n$, denoted $\wt{E}$,   is the number of $M_j$'s that are non-identity matrices.

\subsection{Quantum channels} \label{subsec:quantum-channels}
A quantum channel that takes an $n$-qubit state to an $n$-qubit state is a completely positive and trace-preserving linear map from $L(\mathbb  C^{2^n})$ to $L(\mathbb  C^{2^n})$, where $L(\mathbb C^{2^n})$ is the set of all linear maps from 
$\mathbb C^{2^n}$ to $\mathbb C^{2^n}$.
In particular, every quantum channel $\mathcal N$ admits a (non-unique) decomposition into Kraus operators $\Omega=\{W\}\subset L(\mathbb  C^{2^n})$ such that for any $2^n\times 2^n$ matrix $M$, we have
\begin{align}
\mathcal N(M)  
=
\sum_{W \in \Omega} 
    W  M W^\dagger ,
\end{align}
where 
\begin{align}
    \sum_{W \in \Omega} W^\dagger  W = I_{2^n},
\end{align}
where $I_{2^n}$ denotes the $2^n\times 2^n$ identity matrix.

\subsection{Quantum Codes and Weight Enumerators}
\label{subsec:SL-enumerators}

An $n$-qubit quantum code $\cQ$ is a subspace of $\mC^{2^n}$.
Let $P$ denote the codespace projector onto $\cQ$
The quantum code $\cQ$ satisfies the Knill-Laflamme quantum error correction criterion \cite{KnL97} with \textit{minimum distance} $d$  if and only if
\begin{align}
    P E P = g_E P \label{KL-matrix-condition}
\end{align}
for some complex coefficients $g_E$ for every $E \in \mathcal G_n$ of weight at most $d-1$.
This means that any Paulis of weight at most $d-1$ can be detected since it brings logical codewords to orthogonal subspaces. An $n$-qubit quantum code of dimension $M$ and minimum distance $d$ is denoted $((n,M,d))$. If $M=2^k$,
it is denoted $[[n,k,d]]$.

Shor and Laflamme defined two  weight enumerators $\{A\ssl_{i}\}$ and $\{B\ssl_{i}\}$ of  $\cQ$  by
\begin{align}
A\ssl_{i}=&\frac{1}{\tr(P)^2}\sum_{ E\in {\cG}_n, \wt{E}=i } \Tr{E P}\Tr{E^\dag P}, \label{eq:Bij}
\end{align}
and
\begin{align}
B\ssl_{i}=&\frac{1}{\tr(P)}\sum_{ E\in {\cG}_n, \wt{E}=i} \Tr{E PE^\dag P}, \label{eq:Bij_perp}
\end{align}
for $i=0,\dots,n$ \cite{SL97}.
These two weight enumerators will be called \emph{SL enumerators} in this article.
 
  Since ${\cG}_n$ is a basis for the linear operators on $\mathbb{C}^{2^n}$, we have $P=\sum_{E \in \mathcal G_n}a_E E$, where $a_E \in \mathbb R$ because $P$ is Hermitian.
It follows that $\tr(EP) = a_E.$
Since $P$ is a projector, we have $P=P^2$, and it follows that $\tr(P) = \tr(P^2) = 2^n\sum_{E\in \mathcal G_n} a_E^2$.
Hence
\begin{align}
    \sum_i A_i\ssl
= \frac{1}{\tr(P)^2}\sum_E a_E^2.
= \frac{2^n\tr(P)}{\tr(P)^2} = \frac{2^n}{\tr(P)}.
\end{align}
Also $A_0\ssl = 1$.

The power of the SL enumerators is that the (perfect) quantum error correction criterion of Knill and Laflamme are equivalent to certain linear constraints on these SL enumerators.
This is because $A\ssl_i=B\ssl_i$ if and only if \eqref{KL-matrix-condition} holds for every $E \in \mathcal G_n$ of weight $i \leq d-1$.

\subsection{The space of complex square matrices}
\label{subsec:matrix-space}
The Hilbert-Schmidt inner product of two square complex matrices  $U,V$ is defined by 
\begin{align}
    \<U,V\>=\tr(U^\dagger V),
\end{align}
 and induces a norm, called the the Frobenius norm. Namely, the Frobenius norm of $U$ is defined as $\| U \|_F = \sqrt{ \<U,U\>}$.

Given any complex square matrices $U$ and $V$ of the same size, 
we can use the Gram-Schmidt process to get
\begin{align}
    W = U - R,
\end{align}
where
\begin{align}
    R = \< V, U \> \frac{V}{\| V \|_F}
\end{align}
denotes the component of $U$ which is parallel to $V$, and $W$ denotes the component of $U$ which is orthogonal to $ V/ \| V \|_F$.  
Geometrically, $R$ and $W$ are orthogonal, and they satisfy the Pythagoras theorem in the sense that 
\begin{align}
    \|R\|_F^2 + \|W\|_F^2 = \|U\|_F^2.
\end{align}
Now using the fact that 
    $\|R\|_F^2 = |\<V,U\>|^2$, 
it follows that 
\begin{align}
|\<V,U\>|^2  = \|U\|_F^2 - \|W\|_F^2. 
\label{eq:pythagoras}
\end{align}
 
\section{Quantum weight enumerators for sets of Kraus operators}
\label{sec:Kraus-wtenum}

In what follows, we generalize the SL enumerators to allow direct consideration of an arbitrary set of Kraus operators. 
For the SL enumerators, the error operators considered are the Pauli operators, which form a nice error basis. In generalizing these Pauli operators to general Kraus operators, we will no longer be able to leverage many properties that the nice error basis affords. 
(In particular, it is unknown how to generalize the MacWilliams identity between the SL enumerators.)
We nonetheless can generalize the definition of SL enumerators to general Kraus operators.

Let $P$ be the projector onto a quantum code, and let $\Omega$ be a set of Kraus operators for a quantum channel.
In general, the set of Kraus operators $\Omega$  does not necessarily span the space of linear operators on $\mathbb{C}^{2^n}$, and need not even be a basis for $\mathbb{C}^{2^n}$.
 Suppose that $\Omega$ is partitioned into the disjoint sets $\Omega_0, \dots, \Omega_{w-1}$  in accordance to the severity of the Kraus operators therein.
Here, $w$ counts the number of such sets.
We define two  enumerators (vectors) with coefficients
\begin{align}
    A_i    &= 
    \frac{1}{(\tr P)^2}
    \sum_{E \in \Omega_i} \tr(EP)\tr(E^\dagger P),\ i=0,\dots,w-1,
    \label{Aenum} \\ 
    B_i     &= 
    \frac{1}{\tr P}
    \sum_{E \in \Omega_i} \tr(EPE^\dagger P) , \ i=0,\dots,w-1,\label{Benum}
\end{align}
respectively.
In what follows, we use the Dirac ket notation to represent weight enumerators as in \cite{LHL16}.
We denote the $A$ and $B$-type enumerators as 
\begin{align}
    |A\> &= \sum_{i=0}^{w-1} A_i |i\>,\\
    |B\> &= \sum_{i=0}^{w-1} B_i |i\>.
\end{align}
Note that 
\begin{align*}
    \sum_i B_i =& \frac{1}{\tr P}
    \sum_{E \in \Omega} \tr(EPE^\dagger P) \\
    \leq&  \frac{1}{\tr P} \sum_{E \in \Omega} \tr(EPE^\dagger) =\frac{1}{\tr P}  \tr\left( \left(\sum_{E \in \Omega} E^\dagger E\right) P\right) = 1,
\end{align*}
where the inequality is because both $EPE^\dag$ and $P$ are positive semidefinite and $P\leq I_{2^n}$. 

Hence we have
\begin{align}
    B_0 + \cdots + B_{w-1} \le 1.
\end{align}
Another interpretation of this inequality is that the sum of the coefficients of the $B$-type enumerator retains interpretation as the fidelity of a quantum code after the action of the quantum channel with Kraus operators in $\Omega$ without quantum error correction. 

It can be shown, as in \cite{AL99}, that 
\begin{align}
B_i\geq A_i, \quad i=0,\dots,w-1. \label{eq:BgeqA}
\end{align}
Furthermore, since the code projector $P$ is Hermitian and we have the cyclic property of the trace, it is clear that
$\tr(EP)\tr(E^\dagger P) = \tr(EP)\tr(P^\dagger E^\dagger) = |\tr(EP)|^2$. This implies that $A_i$ is always a sum of non-negative terms, and hence we must have
\begin{align}
    A_i \ge 0 , \quad i = 0,\dots, w-1.
\end{align}

In this paper, we will define approximate quantum error correction using the language of quantum weight enumerators.
In fact, we will see that $B_i-A_i = 0$ is equivalent to saying that the Knill-Laflamme quantum error criterion is satisfied for every Kraus operator in the set $\Omega_i$.
When $B_i-A_i$ is non-zero, we quantify this non-zero quantity precisely in terms of the deviations from the Knill-Laflamme conditions in the Frobenius norm.

 For every $E \in \Omega$, let 
\begin{align}
\mathcal E_E
    = 
PEP - \<PEP,P\>
\frac{P}{\sqrt{\tr(P)}}.
\end{align}
Then we have the following lemma.
\begin{lemma}[Quantum weight enumerators and approximate quantum error correction]
\label{lem:aqec}
Let $A_i$ and $B_i$ be quantum weight enumerators defined by \eqref{Aenum} and \eqref{Benum}, respectively,
using the code projector $P$ and the subsets $\Omega_0,\dots, \Omega_{w-1}$.
Then for every $i = 0,\dots, w-1$, we have 
\begin{align}
B_i-A_i
=
\frac{1}{\tr(P)}
\sum_{E \in \Omega_i}    \|\mathcal E_E \|_F^2.
\end{align}
\end{lemma}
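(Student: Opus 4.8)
The plan is to apply the Pythagoras relation \eqref{eq:pythagoras} once for each Kraus operator $E \in \Omega_i$, with the choices $U = PEP$ and $V = P$, and then sum the resulting identities over $\Omega_i$ and divide by $\tr P$. The key observation is that $\mathcal E_E$ is the component of $PEP$ orthogonal to $P$: since $P$ is a projector, $\|P\|_F = \sqrt{\tr(P^2)} = \sqrt{\tr P}$, so the normalized direction is $P/\sqrt{\tr P}$, and applying \eqref{eq:pythagoras} to $U = PEP$ and the unit matrix $V = P/\|P\|_F$ produces exactly $\mathcal E_E$ in the role of the orthogonal part $W$. With this identification \eqref{eq:pythagoras} reads $\|\mathcal E_E\|_F^2 = \|PEP\|_F^2 - \|R_E\|_F^2$, where $R_E$ denotes the component of $PEP$ parallel to $P/\|P\|_F$.

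The two substantive computations are to match $\|PEP\|_F^2$ and $\|R_E\|_F^2$ with the summands of the $B$- and $A$-enumerators. Using $P^\dagger = P$, the projector identity $P^2 = P$, and cyclicity of the trace, I would verify
\begin{align*}
\|PEP\|_F^2 = \tr\!\big((PEP)^\dagger PEP\big) = \tr(PE^\dagger P E P) = \tr(EPE^\dagger P),
\end{align*}
which is precisely the summand of $B_i$ in \eqref{Benum} before the prefactor $1/\tr P$. For the parallel part, the relevant overlap is $\langle P, PEP\rangle = \tr(P\cdot PEP) = \tr(EP)$, so that $R_E = \tfrac{\tr(EP)}{\tr P}P$ and $\|R_E\|_F^2 = |\tr(EP)|^2/\tr P = \tr(EP)\tr(E^\dagger P)/\tr P$, matching the summand of $A_i$ in \eqref{Aenum} before the prefactor $1/(\tr P)^2$; here I use $\tr(E^\dagger P) = \overline{\tr(EP)}$, already noted in the excerpt.

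Summing the termwise Pythagoras identity over $E \in \Omega_i$ and dividing by $\tr P$ then yields
\begin{align*}
\frac{1}{\tr P}\sum_{E \in \Omega_i}\|\mathcal E_E\|_F^2
= \frac{1}{\tr P}\sum_{E \in \Omega_i}\tr(EPE^\dagger P)
- \frac{1}{(\tr P)^2}\sum_{E \in \Omega_i}\tr(EP)\tr(E^\dagger P)
= B_i - A_i,
\end{align*}
which is the asserted identity.

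I expect the only delicate point to be the normalization of the parallel component. The projection must be taken onto the \emph{unit}-Frobenius-norm matrix $P/\|P\|_F$, not onto $P$ itself, so that $\|R_E\|_F^2$ carries the factor $1/\tr P$ and \eqref{eq:pythagoras} is applied to a genuinely orthogonal decomposition. Tracking this factor correctly is exactly what makes the $1/(\tr P)^2$ prefactor of $A_i$ and the $1/\tr P$ prefactor of $B_i$ assemble into $B_i - A_i$; the remaining trace manipulations are routine consequences of $P^2 = P = P^\dagger$ and cyclicity.
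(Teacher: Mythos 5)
Your proof is correct and follows essentially the same route as the paper's: apply the Pythagorean decomposition of $U=PEP$ against $V=P$, identify $\|PEP\|_F^2=\tr(EPE^\dagger P)$ with the $B_i$ summand and the squared parallel component $|\tr(EP)|^2/\tr(P)$ with the $A_i$ summand, then sum over $\Omega_i$ and divide by $\tr(P)$. If anything, you are more careful than the paper about the normalization: you project onto the unit matrix $P/\|P\|_F$, whereas the paper applies \eqref{eq:pythagoras} with the un-normalized $V=P$ and compensates with explicit $\tr(P)$ factors, so your treatment of the ``delicate point'' is exactly where the bookkeeping matters.
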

\begin{proof}
By identifying $U = PEP$ and $V=P$ and substituting into \eqref{eq:pythagoras}, we get
\begin{align}
|\<PEP,P\>|^2 =
   \|PEP\|_F^2 \tr(P) 
   - \|\mathcal E_E \|_F^2\tr(P).
  \label{pythagoras_2}
\end{align}
Rewriting \eqref{pythagoras_2}, we get
\begin{align}
|\<PEP,P\>|^2 
=
\tr(EPE^\dagger P) \tr(P) 
- \|\mathcal E_E \|_F^2\tr(P).
\label{pythagoras_3}
\end{align}
Since $|\<PEP,P\>|^2 = |\tr(EP)|^2$, this implies that
\begin{align}
A_i
=
B_i 
- 
\frac{1}{\tr(P)}
\sum_{E \in \Omega_i}    \|\mathcal E_E \|_F^2,
\end{align}
from which the lemma follows.
\end{proof}

From this lemma, we obtain a perturbed quantum error correction criterion using the language of quantum weight enumerators.
In particular, we can say that $B_i-A_i$ is  proportional to the sum of the squares of the Frobenius norms of $\mathcal E_E$, where $E\in \Omega_i$.
The matrices $\mathcal E_E$ have been previously studied in Ref.~\cite{ouyang2014permutation}, where their connection to the infidelity of a quantum code is elucidated.

Given the form of Lemma \ref{lem:aqec}, we define a notion of AQEC with respect to the perturbation of the KL conditions as follows.
\bd
\label{def:epsilon-code}
A quantum code with code projector $P$ is $(\epsilon_0,\dots, \epsilon_{w-1})$-AQEC with respect to the sets $\Omega_0,\dots, \Omega_{w-1}$ if 
\begin{align}
   \frac{1}{\tr(P)}
\sum_{E \in \Omega_i}    \|\mathcal E_E \|_F^2 \le \epsilon_i
\end{align}
for all $i=0,\dots, w-1$.
\ed

From this context, perturbations to the Knill-Laflamme quantum error correction criterion $(B\ssl_i-A\ssl_i>0)$ can be understood by directly perturbing the linear constraints on the quantum weight enumerators.

\section{Auxiliary weight enumerators} \label{sec:aux}
Without the existence of a MacWilliams identity, we can nonetheless establish a linear relationship between $|A \>$ and $|B \>$ by introducing additional vectors that reside on an auxiliary space.
Recall that the projector $P$ of a quantum code, when decomposed in the Pauli basis, can be written as
\begin{align}
P=&  \sum_{\sigma\in {\cG}_n} 
\frac{\tr(\sigma P)}{2^{n}} \sigma. \label{eq:P}
\end{align}
\begin{definition}
The \emph{auxiliary (exact) weight enumerator} corresponding to code projector $P$ is  given by
\begin{align}
    \psiew
    &= |\phi\> \otimes |\phi\>  ,
    \label{def:aux}
\end{align}
where
\begin{align}
|\phi\> = \sum_{\sigma  \in  {\cG}_n} \tr(\sigma P)  |\sigma\> .
\end{align}
\end{definition}
The auxiliary weight enumerator is exact in the sense that it encompasses complete information about the quantum code's projector. 
We emphasize that the state $|\phi\>$ depends only  on the code's projector $P$.
Hence $|\phi\>$ is independent of the  channel in consideration. 

Define a swap operation  \begin{align}
    \Pi_{} 
    &= \sum_{\sigma, \tau \in {\cG}_n}
    |\sigma\>\<\tau| \otimes |\tau\>\<\sigma|.
\end{align}
It follows that  $\psiew$ is an eigenvector of $\Pi$ with eigenvaule $+1$
\begin{align}
\Pi_{} \psiew &= \psiew   ,
\end{align} 
since $\tr(\sigma P)\tr(\tau P)$ is invariant under the swap of $\sigma$ and $\tau$.
We later exploit this permutation symmetry to introduce additional constraints in our linear program for amplitude damping channels.
\section{Connection Matrices}\label{sec:connection-matrices}
To establish the connection between our auxiliary weight enumerator $\psiew$ with the two generalized weight enumerators $\ket{A}$ and $\ket{B}$, we define two matrices 
as follows:
\begin{align}
M_A\supewe &= 
    \sum_{i=0}^{w-1} \sum_{E \in \Omega_i}
    \sum_{\sigma , \tau \in {\cG}_n}
    2^{-2n}
    \tr(E \sigma )
    \tr(E^\dagger \tau)
    |i\> \<\sigma|\<\tau|,\label{eq:MA-defi}\\
M_B \supewe 
    &= 
    \sum_{i=0}^{w-1}   \sum_{E \in \Omega_i}
    \sum_{\sigma , \tau \in {\cG}_n}
    2^{-2n}
    \tr(E \sigma E^\dagger \tau)
    |i\> \<\sigma|\<\tau|
    \label{eq:MB-defi}.
\end{align}
The matrices $M_A$ and $M_B$  establish an indirect linear relationship between the generalized enumerators $|A\>$ and $|B\>$ via an additional linear relationship with the auxiliary weight enumerator. 
Namely, we have the following linear relationships.
\begin{lemma} 
\label{lem:connection-matrices}
The following matrix identities hold.
\begin{align}
M_A \supewe  \psiew &=(\tr P )^2 |A \>, 
\label{eq:MAewe}\\
M_B \supewe \psiew &= \tr P    |B \>. \label{eq:MBewe}  
\end{align}
\end{lemma}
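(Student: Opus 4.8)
The plan is to reduce both identities to a single orthogonality relation for the Pauli basis and then to chase traces. The relation I will use repeatedly is that for any two $2^n \times 2^n$ matrices $A$ and $B$,
\begin{align}
\sum_{\sigma \in \cG_n} \tr(\sigma A)\,\tr(\sigma B) = 2^n\, \tr(AB),
\label{eq:pauli-orth}
\end{align}
which follows by expanding $A$ and $B$ in the Pauli basis as in \eqref{eq:P} and using $\tr(\sigma\tau)=2^n\delta_{\sigma,\tau}$. The special case $B=P$ gives $\sum_{\sigma}\tr(\sigma A)\tr(\sigma P)=2^n\tr(AP)$, which is essentially all I will need.

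First I would act with $M_A\supewe$ on $\psiew=|\phi\>\otimes|\phi\>$. Since $\<\sigma|\phi\>=\tr(\sigma P)$, we have $\<\sigma|\<\tau|\big(|\phi\>\otimes|\phi\>\big)=\tr(\sigma P)\tr(\tau P)$, so that
\[
M_A\supewe \psiew = \sum_{i=0}^{w-1}\sum_{E\in\Omega_i} 2^{-2n}\Big(\sum_{\sigma} \tr(E\sigma)\tr(\sigma P)\Big)\Big(\sum_{\tau} \tr(E^\dagger\tau)\tr(\tau P)\Big)|i\>.
\]
Here the $\sigma$- and $\tau$-sums decouple, and \eqref{eq:pauli-orth} evaluates the first to $2^n\tr(EP)$ and the second to $2^n\tr(E^\dagger P)$. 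The two factors of $2^n$ cancel the prefactor $2^{-2n}$, leaving $\sum_i\sum_{E\in\Omega_i}\tr(EP)\tr(E^\dagger P)|i\>$, which equals $(\tr P)^2|A\>$ by the definition \eqref{Aenum}.

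The $M_B\supewe$ identity needs more care, because the two Pauli operators are coupled inside the single trace $\tr(E\sigma E^\dagger\tau)$ and the sums no longer factorize; I would instead evaluate them sequentially. Acting on $\psiew$ gives
\[
M_B\supewe\psiew = \sum_{i}\sum_{E\in\Omega_i}2^{-2n}\sum_{\sigma,\tau}\tr(E\sigma E^\dagger\tau)\,\tr(\sigma P)\tr(\tau P)\,|i\>.
\]
Using the cyclic property I rewrite $\tr(E\sigma E^\dagger\tau)=\tr(\sigma\, E^\dagger\tau E)$, so the $\sigma$-sum matches \eqref{eq:pauli-orth} with $B=E^\dagger\tau E$ and collapses to $2^n\tr(PE^\dagger\tau E)=2^n\tr(EPE^\dagger\tau)$. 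A second application of \eqref{eq:pauli-orth}, now to the $\tau$-sum with $B=EPE^\dagger$, yields $2^{2n}\tr(EPE^\dagger P)$. After cancelling against $2^{-2n}$ I obtain $\sum_i\sum_{E\in\Omega_i}\tr(EPE^\dagger P)|i\>=\tr P\,|B\>$ by \eqref{Benum}.

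The only genuine subtlety is the bookkeeping in the $M_B$ case: one must apply the orthogonality relation twice, repositioning operators by the cyclic property between the two applications so that each Pauli sum is isolated against a single matrix argument. Everything else is a direct substitution of the definitions of $\psiew$, $M_A\supewe$, $M_B\supewe$, and the enumerators $A_i$ and $B_i$.
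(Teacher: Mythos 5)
Your proof is correct and is essentially the paper's argument read in the opposite direction: the paper expands $P$ in the Pauli basis inside the definitions of $|A\>$ and $|B\>$ and matches the resulting double sums over $\sigma,\tau$ against $M_A\supewe\psiew$ and $M_B\supewe\psiew$ computed by direct substitution, whereas you start from $M_A\supewe\psiew$ and $M_B\supewe\psiew$ and collapse the Pauli sums via the orthogonality relation $\sum_{\sigma}\tr(\sigma A)\tr(\sigma B)=2^n\tr(AB)$ --- the same underlying identity. There are no gaps.
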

\begin{proof} 
By   (\ref{eq:P}), 
we get
\begin{align}
    |A  \>   =&
    \frac{1}{(\tr P )^2}
    \sum_{i=0}^{w-1}
    \sum_{E \in \Omega_i} 2^{-2n}
    \sum_{\sigma, \tau \in {\cG}_n} 
    \tr(E\sigma)\tr(E^\dagger \tau)
    \notag\\
    &\quad \times
    \tr(\sigma P) \tr( \tau P) |i\>.
\end{align}
Also we can see that 
\begin{align}
    M_A \supewe \psiew
    =
    \sum_{i=0}^{w-1}
    \sum_{\substack{
    E \in \Omega_i\\
    \sigma , \tau \in {\cG}_n
    }}
  \frac{  \tr(E \sigma )
    \tr(E^\dagger \tau) }{2^{2n} }
    |i\>  
    \tr(\sigma P) 
    \tr(\tau P)  . 
\end{align}
Hence \eqref{eq:MAewe} holds. 

To obtain the second identity,
we also expand the code projector $P$ in the Pauli basis to get
\begin{align}
    |B \>   &=
    \frac{1}{\tr(P)}
    \sum_{i=0}^{w-1}
    \sum_{E \in \Omega_i} 2^{-2n}
    \sum_{\sigma, \tau \in {\cG}_n} 
    \tr(E\sigma E^\dagger \tau)
    \tr(\sigma P) \tr( \tau P) |i\>.
\end{align}
Next, note that 
\begin{align}
    M_B \psiew
    =
    \sum_{i=0}^{w-1}     \sum_{\substack{
    E \in \Omega_i\\
    \sigma , \tau \in {\cG}_n
    }}
    2^{-2n}
    \tr(E \sigma E^\dagger \tau)
    |i\>  
    \tr(\sigma P) 
    \tr(\tau P)  .
\end{align}
The result 
$M_B \supewe  \psiew = \tr(P)    |B \>$ then follows.
\end{proof}
While we do not have a direct linear relationship between the generalized quantum weight enumerators $|A\>$ and $|B\>$, Lemma~\ref{lem:connection-matrices} establishes a linear relationship between each generalized quantum weight enumerator and the auxiliary weight enumerator. This thereby establishes an indirect linear relationship between $|A\>$ and $|B\>$, which allows us to establish linear programming bounds for AD codes  later. 

It is also important to note the following properties of connection matrices.
\begin{enumerate}
    \item The connection matrices $M_A\supewe$ and $M_B\supewe$ are devoid of information about the code, because they are both independent of the code projector $P$.
\item The connection matrices $M_A\supewe$ and $M_B\supewe$ depend on the set of Kraus operators $\Omega$ that describe the underlying quantum channel.
\end{enumerate}

\section{Linear Programming Bounds for general quantum channels} \label{sec:linear program bounds}

Here, given a quantum channel with a set of Kraus operators $\Omega$, partitioned into disjoint subsets $\Omega_{0}, \dots, \Omega_{w-1}$, we introduce a linear program with optimization variables  $A_0,\dots,A_{w-1}$, $B_0,\dots,B_{w-1}$,
which are non-negative. 
The constraints in this linear program arise from relating the $A$- and $B$-type quantum weight enumerators introduced in \eqref{Aenum} and \eqref{Benum}. 
While the $A$- and $B$-type enumerators do not necessarily have a direct linear relation to one another, 
they are both directly linearly related to the auxiliary exact weight enumerator of a quantum code \eqref{def:aux} via the connection matrices \eqref{eq:MA-defi} and \eqref{eq:MB-defi}. 
Infeasibility of this linear program allows us to to establish the non-existence of certain AQEC quantum codes. 
Since it is only the feasibility of the linear program that is important, we can always set the objective function of the linear program to be a constant, that is for instance~0.

If a quantum code is $(\epsilon_0, \dots, \epsilon_{w-1})$-AQEC with respect to the sets $\Omega_0,\dots, \Omega_{w-1}$,
then the following linear constraints admits a feasible solution.
\begin{align}
    {\rm Find}\ \ A_0,\dots,&A_{w-1},B_0,\dots, B_{w-1}, \psiew\notag\\
      {\rm subject\ to\ } 
(\tr P)^2|A  \>      &=M_A 
\psiew \notag\\
\tr P|B  \> &=M_B   \psiew\notag\\
0\leq B_i    - A_i  &\le \epsilon_i  , \quad 0 \le i \le w-1\notag\\ 
B_0+\dots +B_n      &\le 1 \notag\\ 
A_i      &\ge 0, \quad 0 \le i \le w-1  \notag\\
\Pi \psiew &= \psiew. \label{eq:linear-constraints}
\end{align}
Here at this abstract level, the distance $d$ does not appear. If we work with the SL enumerators, then we have $\epsilon_0 = \cdots= \epsilon_{d-1}=0$.
We have thereby derived a linear programming bound that applies to any quantum error-correcting code given under the influence of any noisy quantum channel.
Note that $\tr P$ is the dimension of the quantum code and hence a constant in the linear program.

The independence of the auxiliary weight enumerator on the underlying quantum channel allows us to establish a single linear program for an 
entire family of quantum channels with respect to a fixed quantum code, and we illustrate this using the amplitude damping channel in the next section.

\section{Applications} \label{sec:applications}
\subsection{Amplitude damping errors} \label{subsec:AD}

AD errors model energy relaxation in quantum harmonic oscillator systems and photon loss in photonic systems.
By ensuring that each quantum harmonic oscillator couples identically to a unique bosonic bath, in the low temperature limit,
the effective noise model can be described by an AD channel.
When quantum information lies in a qubit, 
the corresponding AD channel 
$\cN_\gamma$ models energy loss in a two-level system, where $\gamma$ is the probability that an excited state relaxes to the ground state. $\cN_\gamma$ has two Kraus operators $K_0$ and $K_1$, where
\[
K_0=\begin{bmatrix}1&0\\0&\sqrt{1-\gamma}\end{bmatrix}, \quad K_1=\begin{bmatrix}0&\sqrt{\gamma}\\ 0&0\end{bmatrix}.
\]
When energy loss occurs independently and identically in an $n$-qubit system, 
the corresponding noisy channel can be modeled 
as $\cN_{n,\gamma}=\cN_{\gamma}^{\otimes n}$.
The set of all Kraus operators of $\cN_{n,\gamma}$ can be written as 
\begin{align}
\mathcal K =  \{K_{\bx}\triangleq K_{x_1}\otimes \cdots \ot K_{x_n}: \bx\in \zo^n\}.   
\end{align}
Since the Kraus operator $K_1$ models energy loss on one qubit, 
it is useful to know how many times the Kraus operator $K_1$ occurs in $K_\bx$.
Hence we define the following property of $K_\bx$.

\bd The  weight of $K_{\bx}$ for $x\in\{0,1\}^n$ is  $\wt{\bx}$.
\ed
The weight of $K_\bx$ counts the number of qubits where $K_\bx$ induces energy loss.
For example, $\wt{K_1\otimes K_0\ot K_1}=\wt{K_{101}}=\wt{101}= 2$, which corresponds to energy loss in two qubits.
Using this notion of weight, we partition the set of Kraus operators $\cK$ accordingly. Namely, by denoting 
\begin{align}
    \cK_i =  \{ E \in \cK : \wt{E} = i\},
\end{align}
we have $\cK = \cK_0 \cup \dots \cup \cK_n$.
In this terminology, a code corrects $t$ errors perfectly if all the errors in $\cK_i$ for $i \le t$ satisfy the Knill-Laflamme quantum error correction criterion \cite{KnL97}.

Specializing to the case of AD errors, our enumerators are vectors with coefficients
\begin{align}
    A_i    &= 
    \frac{1}{(\tr P)^2}
    \sum_{E \in \cK_i} \tr(EP)\tr(E^\dagger P),\ i=0,\dots,n, \\ 
    B_i     &= 
    \frac{1}{\tr P}
    \sum_{E \in \cK_i} \tr(EPE^\dagger P) , \ i=0,\dots,n.
\end{align}
The corresponding connection matrices are

\begin{align}
M_A\supewe &= 
    \sum_{i=0}^{n} \sum_{E \in \cK_i}
    \sum_{\sigma , \tau \in {\cG}_n}
    2^{-2n}
    \tr(E \sigma )
    \tr(E^\dagger \tau)
    |i\> \<\sigma|\<\tau|,\label{eq:MA-defi-AD}\\
M_B \supewe 
    &= 
    \sum_{i=0}^{n}   \sum_{E \in \cK_i}
    \sum_{\sigma , \tau \in {\cG}_n}
    2^{-2n}
    \tr(E \sigma E^\dagger \tau)
    |i\> \<\sigma|\<\tau|
    \label{eq:MB-defi-AD}.
\end{align}
From Section \ref{sec:Kraus-wtenum}, we know that $B_i\geq A_i \ge 0$ for all $i=0,\dots,n.$

Since the only Kraus operator in $\cK_0$ has a minimum singular value of $(1-\gamma)^{n/2}$ for $A_0$, we have the lower bound
\begin{align}
    A_0 \ge (1-\gamma)^n.
\end{align}
This is reminiscent of the scenario for SL weight enumerators, where we have $A\ssl_0=1.$

Furthermore, it is easy to see that every $B_i$ is at most 
$O(\gamma^i)$, Since the operator norm of Kraus operators from $\cK_i$ is $\gamma^{i/2}$, the operator norm of $EPE^\dagger$ for any $E \in \cK_i$ is at most $\gamma^i \tr(P)$. It follows from the H\"older inequality on the Hilbert-Schmidt inner product that 
\[|\tr(EPE^\dagger P )|= |\<EPE^\dagger ,P \>|\le \|EPE^\dagger \|\|P\|_1,\]
where $\|\cdot \|_1$ denotes the trace norm and $\|\cdot \|$ denotes the operator norm, which is the maximum singular value of a matrix.
Thus by counting the number of terms in $\cK_i$, we have
\begin{align}
    B_i/\gamma^{i} \le \binom n i. 
\end{align}
We can obtain another upperbound on $B_i$. Note that
\begin{align}
  B_i =& \frac{1}{\tr P}
    \sum_{E \in \cK_i } \tr(EPE^\dagger P) \notag\\
    \leq&  \frac{1}{\tr P} \sum_{E \in \cK_i} \tr(EPE^\dagger)\notag\\
     \leq&  \frac{1}{\tr P} \sum_{E \in \cK_i} \tr(EE^\dagger)\notag\\
     =&    \frac{1}{\tr P} {n \choose i} \tr(  \begin{bmatrix}1&0\\0&{1-\gamma}\end{bmatrix}^{\otimes n-i}\otimes \begin{bmatrix}{\gamma}&0\\ 0&0\end{bmatrix}^{\otimes i}
)\\
\le& 
\frac{1}{\tr P}
\binom n i 2^{n-i} \gamma^i. \label{eq:other-Bibound}
\end{align}

The quantum weight enumerators of $\cQ$ for AD channels are 
\begin{align}
|A \>  &=  A_0 |0\> + \dots + A_{n} |n\>, \notag \\
|B \>  &=  B_0 |0\> + \dots + B_{n} |n\> .
\end{align}
We have the following definition of AQEC criterion for AD channels, using the language of quantum weight enumerators.
\bd  \label{def:tc-AD-code}
An $((n,M))$ quantum code is called a $(t,c)$-AD code if
its quantum weight enumerators satisfy the constraints
\begin{align}
B_i-A_i\leq c \gamma^{t+1}, i=0,\dots,t, \label{eq:BleqA}
\end{align}
where $0 \le \gamma \le 1$.
\ed
In the language of Definition \ref{def:epsilon-code}, a $(t,c)$ AD code is
$(c \gamma^{t+1}, \dots, c \gamma^{t+1})$-AQEC with respect to the sets $\mathcal K_0,\dots,\mathcal K_{t}$.

\be
The four-qubit code in~\cite{LNCY97} has two logical codewords
\begin{align*}
\ket{0}_L=& \frac{1}{\sqrt{2}}\left(\ket{0000}+\ket{1111}\right),\\
\ket{1}_L=& \frac{1}{\sqrt{2}}\left(\ket{0011}+\ket{1100}\right).
\end{align*}
It has weight enumerators
\begin{align*}
A_0&=\gamma^4/64 - \gamma^3/4 + 5 \gamma^2/4 - 2 \gamma + 1;\\
A_1&=A_2=A_3=0;\\
A_4&=\gamma^4/64.
\end{align*}
\begin{align*}
B_0&=\gamma^4/16 - \gamma^3/4 + 5 \gamma^2/4 - 2 \gamma + 1;\\
B_2&=3 \gamma^4/8 - 3 \gamma^3/4 + 3 \gamma^2/4;\\
B_4&=\gamma^4/16;\\
B_1&=B_3=0.
\end{align*}
Therefore, this code cannot be a $(2,c)$-AD code for any $c>0$.
Conversely, this code is known to correct an arbitrary single AD error \cite{LNCY97}.

\ee

\be The weight enumerators of the nine-qubit Shor code are as follows.
Note that (\ref{eq:BgeqA}) holds here. In addition, by Definition~\ref{def:tc-AD-code},
this code cannot correct each  AD error of weight three. 
\begin{align*}
A_0=B_0 &=
1 - 9\gamma/2 + 153\gamma^2/16 \\
&\quad - 399\gamma^3/32 + 351\gamma^4/32 +  O(\gamma^5); \\
A_i = B_i &=0, \quad i=1,2,4,5,7,8;\\
A_3 = A_9 &=0;\\
B_3 & = 3\gamma^3/4 + O(\gamma^4);\\
B_9 & =  \gamma^9/32;\\
A_6 & =  3\gamma^6/16 + - 9\gamma^7/32 + 45\gamma^8/256 + O(\gamma^9);\\
B_6 & =  3\gamma^6/16  - 9\gamma^7/32  + 9\gamma^8/32   + O(\gamma^9).
\end{align*}
Since the  leading order of $B_3-A_3$  in $\gamma$ is cubic, the Shor code cannot be a $(3,c)$-AD code for any $c>0$.
Hence, this is consistent with the fact that the Shor code corrects two AD errors~\cite{gottesman-thesis}.
\ee

From the above discussion, the weight enumerators $\ket{A}$ and $\ket{B}$ of a $(t,c)$-AD code must satisfy (\ref{eq:BleqA}), (\ref{eq:MAewe}), and (\ref{eq:MBewe}). We formulate a linear program with a constant objective function, and find non-negative variables $A_0,\dots,A_n$, $B_0,\dots,B_n$ that belong to a particular feasible region. 
The feasibility problem of our linear program is then equivalent to the following. 
\begin{align}
    {\rm Find}\ \ A_0,\dots,&A_n,B_0,\dots, B_n, \psiew\notag\\
      {\rm subject\ to\ } 
(\tr P)^2|A  \>      &=M_A 
\psiew \notag\\
\tr P|B  \> &=M_B   \psiew\notag\\
(B_i    - A_i)/\gamma^{t+1}  &\le c  , \quad 0 \le i \le d-1\notag\\ 
B_i    /\gamma^i  &\le \binom n i  , \quad 0 \le i \le n\notag\\ 
B_i    /\gamma^i  &\le \binom n i 2^{n-i} / \tr P  , \quad 0 \le i \le n\notag\\ 
B_0+\dots +B_n      &\le 1 \notag\\ 
A_i      &\ge 0, \quad 0 \le i \le n  \notag\\
\Pi_{} \psiew &= \psiew. \label{eq:linear-constraints-AD}
\end{align}
Note that the constraint $B_i    /\gamma^i  \le \binom n i 2^{n-i} / \tr P$ only becomes non-trivial for large values of $\tr P$.

Since integer programs are hard to solve in general, 
our feasibility conditions are attractive because they have no integer constraints, in contrast to many other linear programming bounds for stabilizer codes \cite{CRSS98,LBW13,LA18,ALB20}. 
Hence, we have a linear program as opposed to an integer program.
However, one may wonder whether such a linear program is sufficiently constrained to be potentially infeasible.
We demonstrate numerically that our linear program can be infeasible, by analyzing the potential of using three qubits to correct a single AD error.
To do this, we have an additional observation that our linear program is parametrized by $\gamma$.
Since a $(t,c)$-AD code is defined for any value of $\gamma$ in the unit interval, we can concatenate the linear constraints using many different values of $\gamma$. 
Crucially, constraints for different values of $\gamma$ are related because $\psiew$ is independent of $\gamma$. We illustrate the linear dependence of all of our linear constraints in Fig.~\ref{fig:star}.

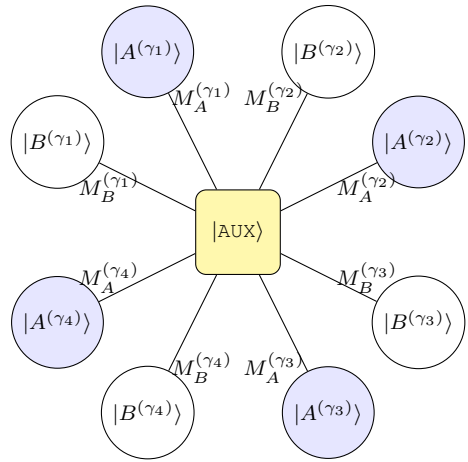
\begin{figure}
\begin{center}
		\begin{tikzpicture}[scale=1.2][thick]
		\fontsize{8pt}{1} 
		\tikzstyle{checknode} = [draw,fill=blue!10,shape= circle,minimum size=0.4em]
		\tikzstyle{variablenode} = [draw,fill=white, shape=circle,minimum size=0.4em]
		\tikzstyle{measurenode} = [draw,fill=yellow!40,shape= rectangle,rounded corners, minimum size=4.0em]
		\node[measurenode] (mn1) at (0,0) {$\ket{{\texttt{AUX}}}$} ;
		\node[checknode] (cn1) at (-1,2) { $\ket{A^{(\gamma_1)}}$} ; ;
		\node[variablenode] (vn4) at (-1,-2) { $\ket{B^{(\gamma_4)}}$} ;
		\node[variablenode] (vn2) at (1,2) {$ \ket{B^{(\gamma_2)}}$} ;
		\node[variablenode] (vn3) at (2,-1) {$\ket{B^{(\gamma_3)}}$} ;
			\node[checknode] (cn3) at (1,-2) { $\ket{A^{(\gamma_3)}}$} ; ;
		\node[checknode] (cn4) at (-2,-1) { $\ket{A^{(\gamma_4)}}$} ;
		\node[checknode] (cn2) at (2,1) {$ \ket{A^{(\gamma_2)}}$} ;
		\node[variablenode] (vn1) at (-2,1) {$\ket{B^{(\gamma_1)}}$} ;
		\node (label1) at (-0.4,1.5) {$M_A^{(\gamma_1)}$} ;
	    \node (label2) at (0.4,1.5)  {$M_B^{(\gamma_2)}$}; 
	    \node (label1) at (-0.4,-1.5) {$M_B^{(\gamma_4)}$} ;
	    \node (label2) at (0.4,-1.5)  {$M_A^{(\gamma_3)}$}; 
		%
		\draw (cn1) --  (mn1) (vn1) --node[midway,left]{$M_B^{(\gamma_1)}$} (mn1)  (vn2) --(mn1) (cn2) --node[midway,right]{$M_A^{(\gamma_2)}$} (mn1);
		\draw (cn3) --  (mn1) (vn3) --node[right]{$M_B^{(\gamma_3)}$} (mn1)  (vn4) -- (mn1) (cn4) --node[midway,left]{$M_A^{(\gamma_4)}$} (mn1);
		%
		\end{tikzpicture}
\end{center}
	\caption{
	   The relationship between various enumerators is depicted here. Every A-type or B-type enumerator for differing values of AD parameter $\gamma_i$ relates linearly to the same auxiliary enumerator.
	  }\label{fig:star}
\end{figure}

To determine if our concatenated linear program is feasible, we code up the linear constraints in the \texttt{MATLAB} solver \texttt{cvx}, and use the algorithm SDPT3.  
In the linear constraints of \eqref{eq:linear-constraints}, we write the monomials of $\gamma$ as denominators. This normalizes our constraints so that a numerical solver can be numerically stable even for small values of $\gamma$.
Also, when coding up the linear constraints of \eqref{eq:linear-constraints} in a solver, 
we do not explicitly construct the permutation matrix $\Pi$ because it is much too big.
Rather we specify its implied linear constraints directly into the optimizer environment for our linear program.

In our numerical study, we analyze the possibility of correcting a single AD error using three qubits.
We obtain mainly no-go results on the existence of a three-qubits code that corrects a single AD error.
For this, we consider four different values of $\gamma$ in the construction of our linear program.
More precisely, we numerically find the maximum $c$ for which the convex solver returns a result that says that the linear program is infeasible. 

\begin{theorem}
There is no three-qubit $(1,9.8 \times 10^4)$-AD code that has dimension two.
\end{theorem}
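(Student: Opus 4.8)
The plan is to prove the statement by contradiction, reducing the existence of such a code to feasibility of the linear program \eqref{eq:linear-constraints-AD} and then certifying that this linear program is infeasible for the stated parameters. Suppose a three-qubit quantum code $\cQ$ of dimension two ($\tr P = 2$) were a $(1,c)$-AD code with $c = 9.8\times 10^4$. By Definition~\ref{def:tc-AD-code} its enumerators satisfy $B_i - A_i \le c\gamma^{2}$ for $i=0,1$, and by Lemma~\ref{lem:connection-matrices} the auxiliary enumerator $\psiew$ built from the code's projector $P$ satisfies $(\tr P)^2 |A\> = M_A\supewe \psiew$ and $\tr P\, |B\> = M_B\supewe \psiew$, with $\Pi_{}\psiew = \psiew$ by construction. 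Together with $A_i \ge 0$, the normalization $\sum_i B_i \le 1$, and the operator-norm bounds on $B_i$ derived in Section~\ref{subsec:AD}, this exhibits a feasible point of \eqref{eq:linear-constraints-AD}. Hence it suffices to show that the program has no feasible solution; the contrapositive then gives the theorem. Note that \eqref{eq:linear-constraints-AD} is only a relaxation, since it imposes the permutation symmetry $\Pi_{}\psiew = \psiew$ rather than the genuine rank-one product structure $\psiew = |\phi\>\otimes|\phi\>$; this enlarges the feasible region, so infeasibility of the relaxation is a strictly stronger and therefore sufficient conclusion.

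Concretely, I would first assemble the connection matrices $M_A\supewe$ and $M_B\supewe$ from \eqref{eq:MA-defi-AD} and \eqref{eq:MB-defi-AD} with $n=3$. This is a finite computation: the sums run over the eight AD Kraus operators $K_\bx$, $\bx\in\zo^3$, and over the $64$ Paulis $\sigma,\tau \in \cG_3$, producing matrices from the auxiliary space to the enumerator space. The crucial structural observation is that $\psiew$ depends only on $P$ and not on $\gamma$, so a genuine code must produce a \emph{single} $\psiew$ that simultaneously satisfies the connection relations for every value of the damping parameter. I would therefore instantiate the matrices $M_A^{(\gamma_j)}$, $M_B^{(\gamma_j)}$ at four distinct values $\gamma_1,\dots,\gamma_4$ and stack all of the resulting equality constraints, all sharing the one vector $\psiew$, exactly as depicted in Fig.~\ref{fig:star}. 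This concatenation is what makes the program tightly enough constrained to become infeasible.

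I would then solve the resulting finite linear feasibility problem with a numerical solver (\texttt{cvx} with \texttt{SDPT3}), taking a constant objective and reading off the infeasibility flag. To keep the solver stable for small $\gamma$, I would normalise each constraint by writing the monomials $\gamma^i$ as denominators, as already done in \eqref{eq:linear-constraints-AD}, and I would not form the exponentially large matrix $\Pi_{}$ explicitly but instead pass its induced equalities (symmetry of the coefficients of $\psiew$ under swapping the two tensor factors) directly to the solver. Maximising $c$ over the range for which the solver still reports infeasibility yields the threshold $9.8\times 10^4$ quoted in the statement.

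The main obstacle is making the infeasibility rigorous rather than merely numerical. A clean proof wants a Farkas-type dual certificate: a nonnegative combination of the (relaxed) constraints that derives a contradiction such as $0 \le -1$. The solver effectively produces such a dual infeasibility certificate, but turning it into a mathematically rigorous argument requires extracting the dual multipliers and verifying the contradiction in exact (or interval) arithmetic, which is delicate precisely because the constraints span many orders of magnitude in $\gamma$. A secondary concern is confirming that four values of $\gamma$ supply enough independent constraints to force infeasibility at so large a value of $c$; if they did not, the remedy is simply to adjoin further values of $\gamma$, each contributing new constraints linked through the same $\psiew$.
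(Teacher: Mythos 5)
Your proposal matches the paper's proof, which is exactly this: instantiate the linear program \eqref{eq:linear-constraints-AD} for $n=3$, $\tr P = 2$, $t=1$ at the four values $\gamma = 0.1, 0.05, 0.01, 0.0001$, share the single $\gamma$-independent $\psiew$ across all of them, and report the solver's infeasibility for $c = 9.8\times 10^4$. Your closing caveat about needing an exact-arithmetic Farkas certificate is well taken, but the paper does not supply one either --- its stated proof is purely the numerical infeasibility report --- so you have not missed anything the authors actually did.
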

\begin{proof}
For $n=3$, $M=2$, $t=1$, we rule out $c=9.8 \times 10^4$ using
$\gamma =  0.1 ,   0.05,    0.01,    0.0001$.
\end{proof}
Numerically, this value of $c=9.8 \times 10^4$ is the largest we could find for the parameters $n=3$, $M=2$, $t=1$.
If we could rule out three-qubit $(1,c)$-AD codes for all positive numbers $c$, then we would able to rule out all three-qubit codes that correct a single AD error.
\subsection{Linear programming bounds for permutation-invariant quantum codes}\label{subsec:picodes}
Permutation-invariant quantum codes are quantum codes that are invariant under any permutation of their underlying particles. 
Such codes have been studied in the qubit \cite{Rus00,PoR04,ouyang2014permutation,ouyang2015permutation}, the qudit \cite{OUYANG201743}, and the bosonic \cite{ouyang2019permutation} settings. These quantum codes are interesting because of not only  their capability to correct non-trivial errors such as quantum deletions \cite{HagiwaraISIT2020,ouyang2021permutation} and insertions \cite{shibayama2021equivalence}, but also   their potential applications as quantum memories \cite{ouyang2019mems} and for robust quantum metrology \cite{ouyang2019robust}. One key attractive feature of permutation-invariant quantum codes is the ease in which they can be prepared in physical systems \cite{wu2019initializing,johnsson2020geometric} as compared to the usual stabilizer codes.

Here, we restrict our attention to permutation-invariant quantum codes on qubits, and use linear programming methods to establish upper bounds on the minimum distance of permutation-invariant quantum codes of designed distances $d$.
A permutation-invariant quantum code has distance $d$ if it satisfies the Knill-Laflamme quantum error correction criterion that for every pair of orthogonal logical codewords $|i_L\>$ and $|j_L\>$ and for every $E \in \mathcal G_n$ of weight at most $d-1$,
\begin{align}
    \<i_L | E |j_L\> = g_E \delta_{i,j}
\end{align}
 for some complex coefficient $g_E$.

In this section, for   $i=0,1,\dots, n$, we set
\begin{align}
    \Omega_i = \{E \in \cG_n : \wt{E} = i\}.
\end{align}
Now let $P$ be a projector onto a permutation-invariant code. 
We are interested in the distance of  a permutation-invariant quantum code, and hence we use the usual SL-enumerators
$\{A\ssl_{i}\}$ and $\{B\ssl_{i}\}$.
While the $A$-type and $B$-type SL enumerators are related by the quantum MacWilliams identity, they are also related to our auxiliary weight enumerators using the corresponding connection matrices 
\begin{align}
M\ssl_A\supewe &= 
    \sum_{i=0}^{w-1} \sum_{E \in \Omega_i}
    \sum_{\sigma , \tau \in {\cG}_n}
    2^{-2n}
    \tr(E \sigma )
    \tr(E^\dagger \tau)
    |i\> \<\sigma|\<\tau|,\label{eq:MA-defi-SL}\\
M\ssl_B \supewe 
    &= 
    \sum_{i=0}^{w-1}   \sum_{E \in \Omega_i}
    \sum_{\sigma , \tau \in {\cG}_n}
    2^{-2n}
    \tr(E \sigma E^\dagger \tau)
    |i\> \<\sigma|\<\tau|
    \label{eq:MB-defi-SL}.
\end{align}
Now we will proceed to explain how we can impose permutation-invariant constraints on the auxiliary weight enumerator.
Note that for any qubit permutation $\pi$, we must have 
\begin{align}
P \pi = P  = \pi P,
\end{align}
for the projector $P$ of any permutation-invariant quantum code.
Then for any Pauli $\sigma$ and qubit permutation $\pi$, we  see that
\begin{align}
\tr(\pi \sigma \pi^\dagger P) = \tr(\sigma \pi^\dagger P \pi ) = \tr(\sigma P).
\end{align}
For non-negative integers $x,y,z$ such that $x+y+z \le n$, let
\begin{align}
\sigma_{x,y,z,n} = X^{\otimes x} \otimes Y^{\otimes y} \otimes Z^{\otimes z}  \otimes I^{\otimes (n-x-y-z)} .
\end{align}
Now define the sets
\begin{align}
C_{x,y,z,n} = \{ \pi \sigma_{x,y,z,n} \pi^\dagger : \pi \in S_n\},
\end{align}
where the symmetric group $S_n$ denotes the set of $n!$ permutations on $n$ qubits.
We can see that the set of Pauli operators $\mathcal G_n$ can be partitioned into the sets $C_{x,y,z,n}$.

We define an auxiliary enumerator for the permutation-invariant code
\begin{align}
|\phi_\picode\> = \sum_{0 \le x+y+z\le n} 
\tr(\sigma_{x,y,z,n}P)
|\sigma_{x,y,z,n}\>,
\end{align}
and let
\begin{align}
|{\rm AUX}_\picode\> = |\phi_\picode\>\otimes |\phi_\picode\>
\end{align}
denote the compressed auxiliary weight enumerator.  
Note that since $P$ is a Hermitian operator, it can always be expressed as a linear combination of Pauli matrices with real coefficients. Hence both 
$|\phi_{\picode}\>$ and
$|{\rm AUX}_\picode\>$ are real vectors.

A simple consequence of Lemma~\ref{lem:connection-matrices} is the following result.
\begin{lemma} 
\label{lem:connection-matrices-picodes}
Let $P$ be a code projector onto a permutation-invariant code, and let 
\begin{align}
W_n = \sum_{\substack{ 
0 \le x + y + z \le n\\ } 
}  \sum_{\tau \in C_{x,y,z,n}}
|\tau\>\<\sigma_{x,y,z,n}|.
\end{align}
denote a matrix with $4^n$ rows and $\binom{n+3}{3}$ columns. 
Then the following matrix identities hold.
\begin{align}
M\ssl_A (W_n \otimes W_n)   |{\rm AUX}_\picode\> &=(\tr P )^2 |A\ssl \>, 
\label{eq:MAewe-SL}\\
M\ssl_B (W_n \otimes W_n)   |{\rm AUX}_\picode\>&= \tr P    |B\ssl \>, \label{eq:MBewe-SL}  
\end{align}
where 
$|A\ssl\> = \sum_{j=0}^n A\ssl_j|j\>$
and
$|B\ssl\> = \sum_{j=0}^n B\ssl_j|j\>$.
\end{lemma}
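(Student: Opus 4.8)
The plan is to reduce Lemma~\ref{lem:connection-matrices-picodes} to the already-proved Lemma~\ref{lem:connection-matrices} by showing that $(W_n \otimes W_n)|{\rm AUX}_\picode\>$ equals the original auxiliary weight enumerator $\psiew$. Once that single vector identity is in hand, the two claimed matrix identities \eqref{eq:MAewe-SL} and \eqref{eq:MBewe-SL} follow immediately by substituting $\psiew = (W_n \otimes W_n)|{\rm AUX}_\picode\>$ into \eqref{eq:MAewe} and \eqref{eq:MBewe}, since the connection matrices $M\ssl_A$ and $M\ssl_B$ in \eqref{eq:MA-defi-SL} and \eqref{eq:MB-defi-SL} are literally the matrices $M_A\supewe$ and $M_B\supewe$ specialized to the partition $\Omega_i = \{E : \wt{E} = i\}$.

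First I would compute $W_n |\phi_\picode\>$. Expanding the definitions gives
\begin{align}
W_n |\phi_\picode\>
= \sum_{0 \le x+y+z \le n} \tr(\sigma_{x,y,z,n} P)
\sum_{\tau \in C_{x,y,z,n}} |\tau\>.
\end{align}
The key fact to invoke here is the permutation-invariance computation already carried out in the excerpt: for every $\tau = \pi \sigma_{x,y,z,n} \pi^\dagger \in C_{x,y,z,n}$ we have $\tr(\tau P) = \tr(\sigma_{x,y,z,n} P)$. Hence the coefficient $\tr(\sigma_{x,y,z,n}P)$ attached to the block can be moved onto each basis vector $|\tau\>$ individually, so that
\begin{align}
W_n |\phi_\picode\>
= \sum_{0 \le x+y+z \le n}
\sum_{\tau \in C_{x,y,z,n}} \tr(\tau P)\,|\tau\>
= \sum_{\sigma \in \cG_n} \tr(\sigma P)\,|\sigma\>
= |\phi\>,
\end{align}
where the middle equality uses that the sets $C_{x,y,z,n}$ partition $\cG_n$. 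Taking the tensor square then yields
\begin{align}
(W_n \otimes W_n)|{\rm AUX}_\picode\>
= (W_n|\phi_\picode\>) \otimes (W_n|\phi_\picode\>)
= |\phi\> \otimes |\phi\>
= \psiew.
\end{align}

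The main (indeed only) obstacle is the bookkeeping that $\{C_{x,y,z,n} : 0 \le x+y+z \le n\}$ is genuinely a partition of $\cG_n$, which the excerpt asserts and which I would take as given; everything else is a direct substitution. With the vector identity $(W_n \otimes W_n)|{\rm AUX}_\picode\> = \psiew$ established, I substitute into \eqref{eq:MAewe} and \eqref{eq:MBewe} of Lemma~\ref{lem:connection-matrices} to obtain \eqref{eq:MAewe-SL} and \eqref{eq:MBewe-SL}, completing the proof. The dimensions are consistent: $W_n$ maps the $\binom{n+3}{3}$-dimensional compressed space into the $4^n$-dimensional Pauli space, so $W_n \otimes W_n$ carries $|{\rm AUX}_\picode\>$ into the domain on which $M\ssl_A$ and $M\ssl_B$ act.
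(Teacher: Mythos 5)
Your proposal is correct and follows essentially the same route as the paper's own proof: establish $W_n|\phi_\picode\> = |\phi\>$ via the permutation-invariance identity $\tr(\tau P)=\tr(\sigma_{x,y,z,n}P)$ and the partition of $\cG_n$ into the classes $C_{x,y,z,n}$, tensor it to get $(W_n\otimes W_n)|{\rm AUX}_\picode\>=\psiew$, and then invoke Lemma~\ref{lem:connection-matrices}. No gaps to report.
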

\begin{proof}
Given any code projector $P$, we can write 
$|\phi\> = \sum_{\sigma \in \mathcal G_n }\tr(\sigma  P )|\sigma\>$. 
Now note that 
\begin{align}
W_n|\phi_\picode\>
= \sum_{0\le x+y+z\le n }
\sum_{\tau \in C_{x,y,z,n}}
\tr(\sigma_{x,y,z,n}P)
|\tau\>.
\end{align}
Since the code is permutation-invariant, we have that 
\begin{align}\tr(\tau P ) = \tr(\sigma_{x,y,z,n} P)\end{align}
for every 
$\tau \in C_{x,y,z,n}$.
Using this identity, we find that for permutation-invariant codes, we have 
\begin{align}
W_n|\phi_\picode\>
= \sum_{0\le x+y+z\le n }
\sum_{\tau \in C_{x,y,z,n}}
\tr(\tau P)
|\tau\> = |\phi\>.
\end{align}
Hence it follows that 
\begin{align}
& (W_n \otimes W_n ) |{\rm AUX}_\picode\>
 \notag\\
 =&
 ( W_n \otimes W_n ) |\phi_\picode\>
 \otimes
 |\phi_\picode\>
 \notag\\
 =&
 ( W_n |\phi_\picode\> ) \otimes 
 ( W_n |\phi_\picode\> ) \notag\\
 =&
 |\phi\> \otimes |\phi\>
 \notag\\
 =&|{\rm AUX}\> .
\end{align}
Substituting this into Lemma \ref{lem:connection-matrices} proves the result.
\end{proof}

Here, the number of columns in $W_n$ corresponds to the number of combinations of non-negative integers $x,y,z$ such that the constraint $x+y+z \le n$ is satisfied. The key difference between Lemma \ref{lem:connection-matrices} and 
Lemma \ref{lem:connection-matrices-picodes} is that the auxiliary weight enumerator for a permutation-invariant code has  dimension that is exponentially smaller than the dimension of $\psiew$. Namely, the dimension of 
$|\phi_\picode\>$ is $\binom{n+3}{3}$, which implies that the dimension of 
$|{\rm AUX}_\picode\>$ is $\binom{n+3}{3}^2 = \mathcal O(n^6),$ 
which grows only polynomially in $n$.

In what follows, we show that for permutation-invariant quantum codes, we can further compress the sizes of both the auxiliary weight enumerators and the connection matrices. Namely, in place of using $ |{\rm AUX}_\picode\> $, we can use  $|\phi_\picode\>$, and instead of using the connection matrices $M\ssl_A$ and $M\ssl_B$, we can use the following {\em compressed connection matrices} 
\begin{align}
    {\hat{M}}_A 
= &
    \sum_{\substack{ 0\le i\le n \\   } }
    \sum_{\substack{ 
 x + y + z = i\\   
 }} 
\binom{n}{x,y,z}    |i\> \<\sigma_{x,y,z,n}|,    
\end{align}
and
\begin{align}
    {\hat{M}}_B
&=
    \sum_{\substack{ 
    0\le i\le n \\ 
    0 \le x + y + z \le n} }
    F(i,x,y,z,n) 
    |i\>\<\sigma_{x,y,z,n}|,
\end{align}
where
\begin{align}
    F(i,x,y,z,n) = 
    \sum_{\substack{ 
E \in \cG_i \\
\sigma \in C_{x,y,z,n}\\
 }}
    2^{-n}
    \beta(E,\sigma),
\end{align}
and given any two Paulis $\sigma$ and $\tau$ in $\cG_n$,  $\beta(\sigma,\tau)=1$, if $\sigma$ and $\tau$ commute, and  
$\beta(\sigma,\tau)=-1$, otherwise. Note that 
\begin{align}
    F(0,x,y,z,n) = \binom{n}{x,y,z}2^{-n},
\end{align}
where $\binom{n}{x,y,z} = n!/(x!y!z!(n-x-y-z)!)$
is a multinomial coefficient.
We can furthermore exploit the symmetry of the summation in the definition of $F$ to get
\begin{align}
    F(i,x,y,z,n) = 
    \sum_{\substack{ 
a+b+c = i \\
\sigma \in C_{x,y,z,n}\\
 }}
    2^{-n}
    \beta(\sigma_{a,b,c,n},\sigma) \binom{n}{a,b,c}.\label{eq:F-simple-form}
\end{align}
From \eqref{eq:F-simple-form}, we can see that the complexity of evaluating $F$ is $\mathcal O(n^3 2^n)$.

We now present the following lemma, which shows how the $A$-type and $B$-type SL weight enumerators relate to one another via the compressed connection matrices $\cM_A$ and $\cM_B$.
For this, we define the permutation-invariant weight enumerator
\begin{align}
|\pi\> = 
\sum_{0 \le x+y+z \le n } \tr( \sigma_{x,y,z,n} P )^2 |\sigma_{x,y,z,n}\> 
\end{align}
\begin{lemma}
\label{lem:compressed-connection-matrices-picode}
\begin{align}
{\hat{M}}_A  |\pi\> &=      \tr(P)^2 |A\ssl\> \\
{\hat{M}}_B |\pi\>
&=  \tr(P) |B\ssl\> .
\end{align}
\end{lemma}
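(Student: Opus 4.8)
The plan is to verify both identities by direct computation, reducing each to a coefficient-by-coefficient comparison against the definitions of $A\ssl_i$ and $B\ssl_i$. Since the kets $\{|\sigma_{x,y,z,n}\>\}_{0\le x+y+z\le n}$ are orthonormal, applying $\hat{M}_A$ to $|\pi\>$ immediately gives
\begin{align}
\hat{M}_A|\pi\> = \sum_{0\le i\le n}\left(\sum_{x+y+z=i}\binom{n}{x,y,z}\tr(\sigma_{x,y,z,n}P)^2\right)|i\>,
\end{align}
and likewise $\hat{M}_B|\pi\>$ has $|i\>$-coefficient $\sum_{0\le x+y+z\le n}F(i,x,y,z,n)\,\tr(\sigma_{x,y,z,n}P)^2$. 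So the task reduces to matching these two scalar coefficients to $(\tr P)^2 A\ssl_i$ and $(\tr P)B\ssl_i$, respectively.

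For the $A$-type identity, the two ingredients are that the weight-$i$ Paulis in $\cG_n$ partition into the orbits $C_{x,y,z,n}$ with $x+y+z=i$, each of cardinality $\binom{n}{x,y,z}$ (the number of arrangements of $x$ copies of $X$, $y$ of $Y$, $z$ of $Z$, and $n-x-y-z$ identities), and that permutation-invariance forces $\tr(\tau P)=\tr(\sigma_{x,y,z,n}P)$ for every $\tau\in C_{x,y,z,n}$. Since $P$ is Hermitian each $\tr(\tau P)$ is real, so $\tr(\tau P)\tr(\tau^\dagger P)=\tr(\sigma_{x,y,z,n}P)^2$. Grouping the sum $\sum_{\wt{E}=i}|\tr(EP)|^2$ from the definition of $A\ssl_i$ by orbit then yields exactly $\sum_{x+y+z=i}\binom{n}{x,y,z}\tr(\sigma_{x,y,z,n}P)^2$, matching the coefficient above.

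For the $B$-type identity, I would first expand $\tr(EPE^\dagger P)$ in the Pauli basis using \eqref{eq:P}. The key simplification is that for Paulis $E,\sigma$ one has $E\sigma E^\dagger=\beta(E,\sigma)\sigma$ (conjugation by a Pauli only produces a sign), which together with the orthogonality $\tr(\sigma\tau)=2^n\delta_{\sigma,\tau}$ collapses the double Pauli sum to the single sum $\tr(EPE^\dagger P)=2^{-n}\sum_{\sigma\in\cG_n}\beta(E,\sigma)\tr(\sigma P)^2$. Summing over all weight-$i$ operators $E$, regrouping $\sigma$ by the orbits $C_{x,y,z,n}$, and invoking permutation-invariance to replace $\tr(\sigma P)^2$ by $\tr(\sigma_{x,y,z,n}P)^2$, produces precisely $\sum_{0\le x+y+z\le n}F(i,x,y,z,n)\,\tr(\sigma_{x,y,z,n}P)^2$ by the definition of $F$. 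Both sides then agree coefficientwise, and the lemma follows. The main technical obstacle is the $B$-type diagonalization: one must carry the sign $\beta(E,\sigma)$ correctly through the Pauli-basis expansion and recognize that the resulting orbit sums reproduce $F$ exactly (with the $2^{-n}$ factor), rather than a rescaled version; by comparison the $A$-type identity is a routine counting argument.
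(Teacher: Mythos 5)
Your proof is correct, but it takes a different route from the paper's. You verify both identities directly against the definitions \eqref{eq:Bij} and \eqref{eq:Bij_perp} of the SL enumerators: for the $A$-type, the orbit decomposition of the weight-$i$ Paulis into the classes $C_{x,y,z,n}$ of size $\binom{n}{x,y,z}$ together with permutation-invariance ($\tr(\tau P)=\tr(\sigma_{x,y,z,n}P)$ for $\tau\in C_{x,y,z,n}$); for the $B$-type, the diagonalization $\tr(EPE^\dagger P)=2^{-n}\sum_{\sigma\in\cG_n}\beta(E,\sigma)\tr(\sigma P)^2$ obtained from $E\sigma E^\dagger=\beta(E,\sigma)\sigma$ and Pauli orthogonality, followed by the same orbit regrouping to recover $F(i,x,y,z,n)$. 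The paper instead derives the lemma as a corollary of Lemma~\ref{lem:connection-matrices-picodes}: it explicitly computes the composed matrices $M\ssl_A(W_n\otimes W_n)$ and $M\ssl_B(W_n\otimes W_n)$, shows that they collapse onto the diagonal doubled kets $\<\sigma_{x,y,z,n}|\<\sigma_{x,y,z,n}|$ with exactly the coefficients $\binom{n}{x,y,z}$ and $F(i,x,y,z,n)$ appearing in $\hat M_A$ and $\hat M_B$, and then matches $\hat M_A|\pi\>$ with $M\ssl_A(W_n\otimes W_n)|{\rm AUX}_\picode\>$. The underlying algebra (orbit counting, the sign $\beta$, the $2^{-n}$ normalization) is the same in both arguments; your version is self-contained and somewhat shorter since it never touches $W_n$ or the doubled auxiliary enumerator, while the paper's version makes explicit the chain of compressions from $\psiew$ down to $|\pi\>$, which is the organizing idea of that section and is what justifies replacing a $4^{2n}$-dimensional variable by an $\mathcal O(n^3)$-dimensional one in the linear program.
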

\begin{proof}
By the definition of ${\hat{M}}_A$, it follows that
\begin{align}
    {\hat{M}}_A |\pi\>
    =
    \sum_{i=0}^n \sum_{x+y+z=i}
    \binom n{x,y,z}
    |i\> \tr(\sigma_{x,y,z,n}P)^2.
\end{align}
We begin by simplifying the matrices $M\ssl_A (W_n \otimes W_n) $ and 
$M\ssl_B (W_n \otimes W_n) $. 
Note that 
\begin{align}
&
M\ssl_A (W_n \otimes W_n)   \notag\\
= &
    \sum_{\substack{ 0\le i\le n \\  E \in \cG_i  } }
    \sum_{\substack{ 
0 \le x + y + z \le n\\
0 \le a + b + c \le n\\
\sigma \in C_{x,y,z,n}\\
\tau \in C_{a,b,c,n}\\
 }}
    2^{-2n}
    \tr(E \sigma )
    \tr(E^\dagger \tau)
    |i\> \<\sigma_{x,y,z,n}|\<\sigma_{a,b,c,n}|  \notag\\
= &
    \sum_{\substack{ 0\le i\le n \\  
    x'+y'+z'=i\\
    E \in C_{x',y',z',n}  } }
    \sum_{\substack{ 
0 \le x + y + z \le n\\
0 \le a + b + c \le n\\
\sigma \in C_{x,y,z,n}\\
\tau \in C_{a,b,c,n}\\
 }}
   \delta_{E, \sigma}  \delta_{E, \tau}
    |i\> \<\sigma_{x,y,z,n}|\<\sigma_{a,b,c,n}|
    \notag\\
= &
    \sum_{\substack{ 0\le i\le n \\  
    x'+y'+z'=i\\
    E \in C_{x',y',z',n}  } }
    |i\> \<\sigma_{x',y',z',n}|\<\sigma_{x',y',z',n}|\notag\\
= &
    \sum_{\substack{ 0\le i\le n \\  
    x'+y'+z'=i\\
      } }
      | C_{x',y',z',n}|
    |i\> \<\sigma_{x',y',z',n}|\<\sigma_{x',y',z',n}|  \notag\\ 
= &
    \sum_{\substack{ 0\le i\le n \\   } }
    \sum_{\substack{ 
 x + y + z = i\\   
 }} 
\binom{n}{x,y,z}    |i\> \<\sigma_{x,y,z,n}|\<\sigma_{x,y,z,n}|.\notag
\end{align}
Since both 
$ M\ssl_A |{\rm AUX}_\picode\>$ and 
${\hat{M}}_A |\pi\>$ are equal to 
\begin{align}
\sum_{\substack{0 \le i \le n\\x+y+z =i}}
\binom n {x,y,z}
|i\> \tr(\sigma_{x,y,z,n} P)^2,
\end{align}
the first result of this lemma follows from Lemma \ref{lem:connection-matrices-picodes}.
Similarly,
\begin{align}
&M\ssl_B (W_n \otimes W_n)   \notag\\
= &
    \sum_{\substack{ 0\le i\le n \\  E \in \cG_i  } }
    \sum_{\substack{ 
0 \le x + y + z \le n\\
0 \le a + b + c \le n\\
\sigma \in C_{x,y,z,n}\\
\tau \in C_{a,b,c,n}\\
 }}
    2^{-2n}
    \tr(E \sigma  
     E^\dagger \tau)
    |i\> \<\sigma_{x,y,z,n}|\<\sigma_{a,b,c,n}|  \notag\\
= &
    \sum_{\substack{ 0\le i\le n \\  E \in \cK_i  } }
    \sum_{\substack{ 
0 \le x + y + z \le n\\
0 \le a + b + c \le n\\
\sigma \in C_{x,y,z,n}\\
\tau \in C_{a,b,c,n}\\
 }}
    2^{-2n}
    \delta_{\sigma,\tau}
    \tr(E \sigma  
     E^\dagger \tau)
    |i\> \<\sigma_{x,y,z,n}|\<\sigma_{a,b,c,n}|
    \notag\\
= &
    \sum_{\substack{ 0\le i\le n \\  E \in \cG_i  } }
    \sum_{\substack{ 
0 \le x + y + z \le n\\
\sigma \in C_{x,y,z,n}\\
 }}
    2^{-n}
    \varphi(E,\sigma)
    |i\> \<\sigma_{x,y,z,n}|\<\sigma_{x,y,z,n}| \notag\\
= &
    \sum_{\substack{ 
    0\le i\le n \\ 
    0 \le x + y + z \le n} }
    F(i,x,y,z,n) 
    |i\> \<\sigma_{x,y,z,n}|\<\sigma_{x,y,z,n}|.\notag
\end{align}
Since both 
$ M\ssl_B |{\rm AUX}_\picode\>$ and 
${\hat{M}}_B |\pi\>$ are equal to 
\begin{align}
\sum_{\substack{0 \le i \le n\\x+y+z =i}}
F(i,x,y,z,n)
|i\> \tr(\sigma_{x,y,z,n} P)^2,
\end{align}
the second result of this Lemma follows from Lemma \ref{lem:connection-matrices-picodes}.
\end{proof}
Using Lemma \ref{lem:compressed-connection-matrices-picode}, we use the effective auxiliary enumerator $|\phi_\picode\>$, which is of size $\mathcal O(n^3)$.

We now continue to introduce more constraints.
Note that for an $M$-dimensional  permutation-invariant code, its projector $P$  
admits the spectral decomposition \begin{align}
    P = \sum_{j=1}^M |L_j\>\<L_j|,
\end{align}
where $|L_j\>$ correspond to the logical codewords of the permutation-invariant quantum code.
We can expand every logical codeword $|L_j\>$ in the Dicke basis to get
\begin{align}
    |L_j\> 
    = \sum_{w=0}^n 
    a_{j,w} |D^n_w\>,
\end{align}
where $a_{j,w}$ are in general complex coefficients. 
We will restrict ourselves to permutation-invariant quantum codes where $a_{j,k}$ are non-negative.
This is a mild constraints because for every qubit $((n,M,d))$ permutation-invariant quantum code constructed so far \cite{ouyang2014permutation,OUYANG201743}, there exists an equivalent permutation-invariant quantum code that has these properties.

Now let us consider Paulis of the form $\sigma_{x,0,0,n}=X^{\otimes x} \otimes I^{\otimes n-x}$,
and analyze the properties of 
$\tr(\sigma_{x,0,0,n}P)$ when $x$ is even.
Note that
\begin{align}
    \tr(\sigma_{x,0,0,n}P)
    &=\sum_{j=0}^M
    \<L_j| \sigma_{x,0,0,n} | L_j\>\notag\\
    &\ge
    \sum_{j=0}^M
    \sum_{w=0}^n
    a_{j,w}^2
    \<D^n_w| \sigma_{x,0,0,n} | D^n_w\>\notag\\
    &\ge
    \sum_{j=0}^M
    \sum_{w=x/2}^{n-x/2}
    a_{j,w}^2
    \<D^n_w| \sigma_{x,0,0,n} | D^n_w\>\notag\\
    &=
    \sum_{j=0}^M
    \sum_{w=x/2}^{n-x/2}
    a_{j,w}^2
    \frac{\binom x {x/2} \binom {n-x}{w-x/2} }{\binom n w},
\end{align}
where 
the first and second inequalities arise because 
$  \<D^n_w| \sigma_{x,0,0,n} | D^n_{w'}\>$ and $a_{j,w}$ are all non-negative for all $w,w'=0,\dots,n$,
and in the last equality, we used a special case of \cite[Lemma 6]{ouyang2019robust}.

Using \cite[Lemma 2]{ouyang2014permutation},
we find that
\begin{align}
    \frac{\binom x {x/2} \binom {n-x}{w-x/2} }{\binom n w}
    &=
    \frac{\binom w {x/2}
    \binom {n-w}{x/2} }
    {\binom n x}.
\end{align}

Now denote 
\begin{align}
\beta_{n,x} 
    &= 
    \min_{x/2 \le w \le n-x/2}
    \binom w {x/2}
    \binom {n-w}{x/2} 
    /\binom n x.
\end{align}
Then 
\begin{align}
    &\tr(\sigma_{x,0,0,n}P)\notag\\
    \ge &
    \beta_{n,x}
    \sum_{j=0}^M
    \sum_{w=x/2}^{n-x/2}
    a_{j,w}^2\notag\\
    =&
    \beta_{n,x}
    \left(M - 
    \sum_{j=0}^M 
    \left(
       \sum_{w=0}^{x/2-1} a_{j,w}^2 +
       \sum_{w=n-x/2+1}^{n} a_{j,w}^2
    \right)
    \right).
\end{align}
The last equality arises from the normalization condition of the logical operators $|L_j\>$, which implies that $\sum_{w=0}^na_{j,w}^2 = 1$.

Using the fact that $a_{j,w}^2 \le 1$, we find that 
\begin{align}
    \tr(\sigma_{x,0,0,n}P)
    &\ge
    \beta_{n,x}
    \left(M - x   \right). \label{enum-lower-bound}
\end{align}
Note that \eqref{enum-lower-bound} is a strict inequality when $M=2, d>1$ and $(x,y,z)=(2,0,0)$. 
To see this, note that 
when \eqref{enum-lower-bound} holds with equality, this means that $\tr(\sigma_{2,0,0}P) = 0$. 
But the Dicke inner products $\<D^n_w| \sigma_{2,0,0}|D^n_w\>$ being positive for all $w=1,\dots, n-1$, implies that $P$ can only be supported on $|D^n_0\>=|0\>^{\otimes n}$ and $|D^n_n\>^{\otimes n}$. Since $M=2$, $P$ must be equal to $(|0\>\<0|)^{\otimes n} + (|1\>\<1|)^{\otimes n}$, and this is just the projector of the repetition code. The repetition code has a distance equal to 1, and this contradicts the premise that $d>1$.
Hence if $M=2$ and $(x,y,z)=(2,0,0)$, the inequality \eqref{enum-lower-bound} must be strictly positive.

\begin{theorem}
There is no [[5,1,3]] permutation-invariant code that has logical codewords with nonnegative  coefficients $a_{j,w}$ in the Dicke basis.
\end{theorem}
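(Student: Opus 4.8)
The plan is to turn the strict positivity established immediately above into a contradiction with the distance-$3$ requirement. Suppose, for contradiction, that $P$ is the projector of a $[[5,1,3]]$ permutation-invariant code whose logical codewords have nonnegative Dicke coefficients $a_{j,w}$, so that $\tr(P)=M=2$. Since $M=2$, $d=3>1$, and $(x,y,z)=(2,0,0)$, inequality \eqref{enum-lower-bound} holds strictly, and the first thing I would record is its consequence
\[
\tr(\sigma_{2,0,0,5}P) > 0 .
\]

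Next I would convert this into a lower bound on the weight-$2$ $A$-type SL enumerator. By permutation invariance every Pauli in the orbit $C_{2,0,0,5}$ has trace $\tr(\sigma_{2,0,0,5}P)$ against $P$, and there are $|C_{2,0,0,5}|=\binom{5}{2}=10$ of them. Since $A\ssl_2 = \tr(P)^{-2}\sum_{E\in\cG_5,\ \wt E = 2}|\tr(EP)|^2$ is a sum of nonnegative terms, keeping only this orbit gives
\[
A\ssl_2 \ \ge\ \frac{\binom{5}{2}}{\tr(P)^2}\,\tr(\sigma_{2,0,0,5}P)^2 \ >\ 0 .
\]
It then remains to show that a $[[5,1,3]]$ code forces $A\ssl_2=0$, which contradicts the display above. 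The quickest route is to note that $[[5,1,3]]$ saturates the quantum Singleton bound $2(d-1)=n-k$ and is therefore pure, i.e. $A\ssl_1=A\ssl_2=0$; combined with the strict lower bound this closes the argument.

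To remain entirely inside the linear-programming framework of this paper and avoid importing purity, I would instead instantiate the feasibility problem built from Lemma~\ref{lem:compressed-connection-matrices-picode}. I would take the entries $q_{x,y,z} := \tr(\sigma_{x,y,z,5}P)^2 \ge 0$ of $|\pi\>$ as the variables, normalized by $q_{0,0,0}=\tr(P)^2=4$; encode distance $3$ as the equalities $A\ssl_i=B\ssl_i$ for $i=1,2$ through $\hat M_A|\pi\>=\tr(P)^2|A\ssl\>$ and $\hat M_B|\pi\>=\tr(P)|B\ssl\>$; and adjoin the strict constraint $q_{2,0,0}>0$ coming from the display above. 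Exhibiting a nonnegative combination of the equality constraints that is incompatible with $q_{2,0,0}>0$ then certifies infeasibility and proves the theorem.

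I expect the genuine work to lie in this last, self-contained route: it requires evaluating the weight-$2$ connection-matrix row $F(2,x,y,z,5)$ — that is, summing the commutation signs $\beta(\sigma_{a,b,c,5},\sigma)$ over the orbits $C_{x,y,z,5}$ via \eqref{eq:F-simple-form} — and then verifying that $A\ssl_1=B\ssl_1$ and $A\ssl_2=B\ssl_2$ cannot both hold once $q_{2,0,0}$ is strictly positive. The purity argument sidesteps this arithmetic entirely, so I would present it as the clean proof and retain the explicit linear program as the framework-internal alternative.
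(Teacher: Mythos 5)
Your proof is correct and takes essentially the same route as the paper: both arguments derive $\tr(\sigma_{2,0,0,5}P)>0$ from the strictness of \eqref{enum-lower-bound} at $M=2$, $d>1$, and contradict the fact that $A\ssl_2=0$ for any $[[5,1,3]]$ code. The only cosmetic difference is that you justify $A\ssl_2=0$ via purity of Singleton-saturating codes, whereas the paper cites the unique solution of the SL enumerators for the $[[5,1,3]]$ parameters from \cite{SL97}; both are equivalent appeals to known results, and your fully LP-internal alternative is left unexecuted and is not needed.
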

\begin{proof}
This is because we know that there is a unique solution to the SL enumerators for the [[5,1,3]] code~\cite{SL97}, which corresponds to 
$A\ssl_0=1, 
A\ssl_1=0,
A\ssl_2=0,
A\ssl_3=0,
A\ssl_4=15,
A\ssl_5=0.$
But \eqref{enum-lower-bound} holding strictly implies that $A\ssl_2>0$, and hence the linear program for [[5,1,3]] permutation invariant codes must be infeasible, and hence a [[5,1,3]] permutation-invariant code.
\end{proof}

For a permutation-invariant code with nonnegative $a_{j,w}$ to have minimum distance of $d$, the SL enumerators need to satisfy not only the usual MacWilliams identity, but also the additional constraints related to the auxiliary weight enumerator for permutation-invariant codes. 
Let 
\begin{align}
 T_2 =    
 \sum_{\substack{
 x,y \ {\rm even}\\
 0 \le x \le n\\
 0 \le y \le n\\
 x+y \le n\\
 }
 }
 |\sigma_{x,y,0}\>\<\sigma_{x,y,0}|,
\end{align}
and
\begin{align}
 \tau_2 =    
 \sum_{\substack{
 x,y \ {\rm even}\\
 0 \le x \le n\\
 0 \le y \le n\\
 x+y \le n\\
 }
 }
 \beta_{n,x+y}^2(M-x-y)^2
 |\sigma_{x,y,0}\>.
\end{align}
We formulate the linear program that maximizes $A\ssl_2$ subject to the following constraints:
\begin{align}
    {\rm Find}\ \ A\ssl_0,\dots,&A\ssl_n,
    B\ssl_0,\dots, B\ssl_n  
    \ge 0   \notag\\
    |\pi\>  & \in  \mathbb R^{(n+1)(n+2)(n+3)/6}\notag\\
      {\rm subject\ to\ } 
(\tr P)^2|A\ssl  \>      &={\hat{M}}_A |\pi\> \notag\\
\tr P|B\ssl  \> &=  {\hat{M}}_B |\pi\>\notag\\
B\ssl_i    - A\ssl_i  & = 0  , \quad 0 \le i \le d-1\notag\\   
B\ssl_i &\ge A\ssl_i,   \quad d \le i \le n  \notag\\
 |\pi\> &\ge 0\notag\\
T_2|\pi\> &\ge \tau_2   \notag\\
   |\pi\> &\le \tr(P)^2   \notag\\
B\ssl_i&= \frac{\tr(P)}{2^n}\sum_{j=0}^n A\ssl_{j} K_{i}(j;n), \ i=0,\dots,n. \label{eq:LP}
\end{align}
The last equality constraints in program~(\ref{eq:LP}) are from the MacWilliams identity~\cite{SL97} and \begin{align}
K_i(x;n)= \sum_{j=0}^i (-1)^j 3^{i-j}{x\choose j}{n-x \choose i-j}\label{kraw}
\end{align}
is the $i$-th quaternary Krawtchouk polynomial.
Also let 
\begin{align}
    M = 
    \frac{\tr(P)}{2^n}
    \sum_{i=0}^n
    \sum_{j=0}^n
    K_i(j;n)
    |i\>\<j| .
\end{align}
If the linear program is infeasible, then we know for sure that there do not exist $((n,M,d))$ permutation-invariant quantum codes that have logical codewords with nonnegative $a_{j,w}$.
Furthermore, for an $((n,2,d \ge 2))$ permutation-invariant quantum code to exist, 
we know that $A\ssl_2$ must be strictly positive, 
because if $A\ssl_2 = 0$, we must have $\sigma_{2,0,0,n}=0$ which implies that the permutation-invariant code must be a repetition code with $d=1$, which contradicts the assumption that $d \ge 2$.

To aid the running our linear programm numerically using the \texttt{linprog} function of \texttt{MATLAB} so that the linear program can be evaluated using the simplex algorithm, we write the constraints of our linear program in standard form.
Let $I_d = \sum_{j=0}^{d-1}
|j\>\<j|$ and $\bar I_d = 
\sum_{j=d}^n|j\>\<j|$.
Then our equality constraints are
\begin{align}
\begin{pmatrix}
\<0| & 0 & 0 \\
0 & 0 & \<\sigma_{0,0,0}| \\
M & -I & 0 \\
I_d& -I_d& 0  \\
-\tr(P)^2I & 0 & {\hat{M}}_A \\
0& -\tr(P)I & {\hat{M}}_B \\
\end{pmatrix}
    \begin{pmatrix}
    A\ssl \\
    B\ssl \\
    |\pi\> \\
    \end{pmatrix}
    = 
    \begin{pmatrix}
    1\\ \tr(P)^2\\0\\0\\0 \\0\\
    \end{pmatrix}
\end{align}
and the inequality constraints are
\begin{align}
    \begin{pmatrix}
    \bar I_d &
    -\bar I_d &
    0\\
    0 & 0 & I\\
    0& 0& -T_2
    \\
    \end{pmatrix}
    \begin{pmatrix}
    A\ssl \\
    B\ssl \\
    |\pi\> \\
    \end{pmatrix}
    \le
    \begin{pmatrix}
    0 \\ \tr(P)^2 \\ 
    -\tau_2
    \end{pmatrix}.
\end{align}

Finally there is a trivial upper bound that
\begin{align}M \le n+1,
\end{align}
since the dimension of a permutation-invariant quantum code cannot exceed the dimension of the symmetric subspace.
Using our linear programming bounds, together with the trivial bound,  we tabulate upper bounds on permutation-invariant codes that have logical codewords with nonnegative $a_{j,w}$ in Table \ref{tab:pibounds}.
From this table, we can see that to have $d=2$ we need at least $n\ge 4$; for $d=3$, we need $n\ge 3$; for $d=4$, we need $n\ge 8$; and for $d=5$, we need $n \ge 11$.
In comparison, we 
give upper bounds on $M$ for general permutation-invariant codes using only the MacWilliams identities as the constraints, and the trivial $M\le n+1$ bound in Table \ref{tab:general-pibounds}.
We also give known lower bounds for $M$ using  code constructions for permutation-invariant quantum codes that have logical codewords with nonnegative $a_{j,w}$ in Table \ref{tab:lower-pibounds}.

\begin{table}[]
\begin{tabular}{l|llllllllll}
\hline
    &   &   &   &   & $n$ &  &   &   &   &   \\ 
    & 3   &   4 &   5 &   6 &   7 &   8 &   9 &   10 &  11 & 12 \\
    \hline
$d=2$ & 1   &  3  & 5   & 7   & 8   &  9  & 10  & 11  & 12  & 13 \\
$d=3$ &     &     & 1   & 2   & 4   &  6  & 10  & 11   & 12   & 13  \\
$d=4$ &     &     &     &     & 1   &  2  & 4   & 6    & 9   & 13   \\
$d=5$ &     &     &     &     &     &  1  & 1   & 1    & 2    & 4    \\
$d=6$ &     &     &     &     &     &     &     &      & 1    & 1   
\end{tabular}
\caption{Table of upper bounds of $M$ for $((n,M,d))$ permutation-invariant quantum codes that have logical codewords with non-negative $a_{j,w}$. We use our linear programming bounds together with the trivial constraint $M \le n+1$.}
\label{tab:pibounds}
\end{table}

\begin{table}[]
\begin{tabular}{l|llllllllll}
\hline
    &   &   &   &   & $n$ &  &   &   &   &   \\
    & 3   &   4 &   5 &   6 &   7 &   8 &   9 &   10 &  11 & 12 \\
    \hline
$d=2$ & 2   & 4   & 6   & 7  &  8   & 9   & 10  &   11 &  12  & 13 \\
$d=3$ &     &     & 2   & 2   & 4   & 9   & 10  & 11   & 12   & 13  \\
$d=4$ &     &     &     &     & 1   & 2   & 4   & 9    & 12   & 13   \\
$d=5$ &     &     &     &     &     &     & 1   & 2    & 3    & 5    \\
$d=6$ &     &     &     &     &     &     &     &      & 1    & 1   
\end{tabular}
\caption{Table of upper bounds for $M$ for $((n,M,d))$ arbitrary permutation-invariant quantum codes using the quantum MacWilliam identities together with $M\le n+1$. }
\label{tab:general-pibounds}
\end{table}

\begin{table}[]
\begin{tabular}{l|llllllllll}
\hline
    &   &   &   &   & $n$ &  &   &   &   &   \\
    & 3   &   4 &   5 &   6 &   7 &   8 &   9 &   10 &  11 & 12 \\
    \hline
$d=2$ &     & 2$^g$   & 2$^g$   & 2$^g$  &  3$^*$   & 3$^*$   & 3$^*$   &  3$^*$  &  3$^*$     & 4$^*$   \\
$d=3$ &     &     &     &     & 2$^p$   &     & 2$^{rg}$  & 2$^g$   &  2$^g$   &  2$^g$  \\
$d=4$ &     &     &     &     &     &     &     &      &     &      \\
\end{tabular}
\caption{Table of lower bounds for $M$ for $((n,M,d))$ permutation-invariant quantum codes with non-negative $a_{j,w}$. The superscript $^g$ referes to the gnu codes introduced in \cite{ouyang2014permutation}, $^r$ refers to the Ruskai code \cite{Rus00}, $^{p}$ refers to the Pollatsek-Ruskai 7-qubit code, and $^*$ refers to the codes in \cite{OUYANG201743}.
}
\label{tab:lower-pibounds}
\end{table}

\section{Discussions} \label{sec:discussions}
In this paper, we showed that quantum weight enumerators can be generalized to the setting of AQEC. Key to our analysis is our introduction of auxiliary weight enumerators, which allows us to establish an indirect linear relationship between the generalized quantum weight enumerators.

As it stands, the auxiliary weight enumerator is a vector of size $4^{2n}$ in the number of qubits $n$.
We have shown how exploiting the symmetry of permutation-invariant quantum codes can greatly reduce the dimensionality of the auxiliary weight enumerator to have a size that is polynomial in $n$.
Specializing our framework to a broad family of permutation-invariant quantum codes, we use linear programming to obtain non-trivial upper bounds on the maximum number of logical codewords $M$ of such permutation-invariant codes for given length $n$ and distance $d$.

\section{Acknowledgements}\label{eq:acknow}
YO acknowledges support from the EPSRC (Grant No. EP/M024261/1) and the QCDA project (Grant No. EP/R043825/1)) which has received funding from the QuantERA ERANET Cofund in Quantum Technologies implemented within the European Union’s Horizon 2020 Programme.
YO is also supported in part by NUS startup
grants (R-263-000-E32-133 and R-263-000-E32-731), and the
National Research Foundation, Prime Minister ’s Office, Singapore
and the Ministry of Education, Singapore under the
Research Centres of Excellence programme.
CYL was  supported  by the Ministry of Science and Technology (MOST) in Taiwan, under Grant  MOST109-2636-E-009-004 and  Grant MOST110-2628-E-A49 -007.

\bibliography{_paper2}{}
\bibliographystyle{ieeetr}

\end{document}